\DeclareMathOperator{\CNF}{CNF}
\DeclareMathOperator{\DNF}{DNF}
\DeclareMathOperator{\ENF}{ENF}
\DeclareMathOperator{\Base}{Base}
\DeclareMathOperator{\FV}{FV}
\newcommand{\inl}[1]{\iota_1{#1}}
\newcommand{\inr}[1]{\iota_2{#1}}
\newcommand{\ccase}[5]{\delta({#1},{#2}.{#3},{#4}.{#5})}
\newcommand{\ccaseone}[5]{\delta({#1},{#2}.{#3},\\{#4}.{#5})}
\newcommand{\fst}[1]{\pi_1{#1}}
\newcommand{\snd}[1]{\pi_2{#1}}
\newcommand{\pair}[2]{\langle{#1},{#2}\rangle}
\newcommand{\hsinltwo}[1]{\iota_1{#1}}
\newcommand{\hsinrtwo}[1]{\iota_2{#1}}
\newcommand{\hsinldis}[1]{\iota'_1{#1}}
\newcommand{\hsinrdis}[1]{\iota'_2{#1}}
\newcommand{\hscasn}[3]{\delta({x_{#1}}{#2},{#3})}
\newcommand{\hspair}[2]{\langle{#1},\ldots,{#2}\rangle}
\newcommand{\hspairone}[1]{{#1}}
\newcommand{\hspairfour}[4]{\langle{#1},{#2},{#3},{#4}\rangle}
\newcommand{\hspairtwo}[2]{\langle{#1},{#2}\rangle}
\newcommand{\hstt}{\langle\rangle}
\newcommand{\hsappn}[2]{x_{#1}{#2}}
\newcommand{\hswkn}[1]{\text{w}{#1}}
\DeclareMathOperator{\enfop}{enf}
\newcommand{\enf}[1]{\enfop\left({#1}\right)}
\newcommand{\yslant}{0.5}
\newcommand{\xslant}{-0.9}
\newcommand{\etaeq}{=_\eta}
\newcommand{\etaeqe}{=_\eta^\numbere}
\newcommand{\betaeq}{=_\beta}
\newcommand{\betaeqe}{=_\beta^\numbere}
\newcommand{\betaetaeq}{=_{\beta\eta}}
\newcommand{\betaetaeqe}{=_{\beta\eta}^\numbere}
\newcommand{\binomnb}[2]{\genfrac{}{}{0pt}{}{#1}{#2}}
\newtheorem{theorem}{Theorem}
\theoremstyle{definition}
\newtheorem{convention}{Convention}
\newtheorem{example}{Example}
\theoremstyle{remark}
\newcommand{\ntimes}[2]{{#1}\times{#2}}
\newcommand{\explogn}[2]{{#2}\Rightarrow {#1}}
\newcommand{\explogone}[2]{{#2}\rightrightarrows {#1}}
\newcommand{\numbere}{\mathrm{e}}
\begin{document}

\setlength{\pdfpageheight}{\paperheight}
\setlength{\pdfpagewidth}{\paperwidth}

\conferenceinfo{CONF 'yy}{Month d--d, 20yy, City, ST, Country}
\copyrightyear{20yy}
\copyrightdata{978-1-nnnn-nnnn-n/yy/mm}
\copyrightdoi{nnnnnnn.nnnnnnn}

\publicationrights{licensed}     




\title{The exp-log normal form of types}
\subtitle{Decomposing extensional equality and representing terms compactly}

\authorinfo{Danko Ilik}
{Inria \& LIX, Ecole Polytechnique\\
  91128 Palaiseau Cedex, France}
{danko.ilik@inria.fr}

\maketitle
\begin{abstract} 
  Lambda calculi with algebraic data types lie at the core of
  functional programming languages and proof assistants, but conceal
  at least two fundamental theoretical problems already in the
  presence of the simplest non-trivial data type, the sum type. First,
  we do not know of an explicit and implemented algorithm for deciding
  the beta-eta-equality of terms---and this in spite of the first
  decidability results proven two decades ago. Second, it is not clear
  how to decide when two types are essentially the same,
  i.e. isomorphic, in spite of the meta-theoretic results on
  decidability of the isomorphism.

  In this paper, we present the exp-log normal form of types---derived
  from the representation of exponential polynomials via the unary
  exponential and logarithmic functions---that any
  type built from arrows, products, and sums, can be isomorphically
  mapped to. The type normal form can be used as a simple heuristic
  for deciding type isomorphism, thanks to the fact that it is a
  systematic application of the high-school identities.

  We then show that the type normal form allows to reduce the standard
  beta-eta equational theory of the lambda calculus to a specialized
  version of itself, while preserving the completeness of equality on
  terms. 

  We end by describing an alternative representation of normal terms
  of the lambda calculus with sums, together with a Coq-implemented
  converter into/from our new term calculus. The difference with the
  only other previously implemented heuristic for deciding interesting
  instances of eta-equality by Balat, Di Cosmo, and Fiore, is that we
  exploit the type information of terms substantially and this often
  allows us to obtain a canonical representation of terms without
  performing sophisticated term analyses.
\end{abstract}
\category{Software and its engineering}{Language features}{Abstract data types}
\category{Software and its engineering}{Formal language definitions}{Syntax}
\category{Theory of computation}{Program constructs}{Type structures}


\keywords sum type, eta equality, normal type, normal term, type
isomorphism, type-directed partial evaluation

\section{Introduction}
\label{sec:introduction}

The lambda calculus is a notation for writing functions. Be it
simply-typed or polymorphic, it is also often presented as the core of
modern functional programming languages. Yet, besides functions as
first-class objects, another essential ingredient of these languages
are algebraic data types that typing systems supporting only the
$\to$-type and polymorphism do not model directly. A natural model for
the core of functional languages should at least include direct
support for a simplest case of variant types, \emph{sums}, and
of records i.e. \emph{product} types. But, unlike the theory of the
$\{\to\}$-typed lambda calculus, the theory of the
$\{\to,+,\times\}$-typed one is not all roses.

\paragraph{Canonicity of normal terms and $\eta$-equality} A first
problem is canonicity of normal forms of terms.  Take, for instance,
the term $\lambda x y. y x$ of type
$\tau+\sigma\to (\tau+\sigma\to \rho) \to \rho$, and three of its
\emph{$\eta$-long} representations,
\begin{align*}
  \lambda x. &\lambda y. y \ccase{x}{z}{\inl{z}}{z}{\inr{z}}
  \\
  \lambda x. &\lambda y. \ccase{x}{z}{y (\inl{z})}{z}{y (\inr{z})}
  \\
  \lambda x. &\ccase{x}{z}{\lambda y. y (\inl{z})}{z}{\lambda y. y (\inr{z})},
\end{align*}
where $\delta$ is a pattern matching construct, i.e. a
\texttt{case}-expression analysing the first argument, with branches
of the pattern matching given via the variable $z$ in the second and
third argument.

These three terms are all equal with respect to the standard equational
theory $\betaetaeq$ of the lambda calculus (Figure~\ref{fig:syntax}),
but why should we prefer any one of them over the others to be a
\emph{canonical} representative of the class of equal terms?

Or, consider the following two terms of type
$(\tau_1\to \tau_2)\to(\tau_3\to \tau_1)\to \tau_3\to \tau_4+\tau_5\to
\tau_2$ (example taken from \cite{balat_dicosmo_fiore}):
\begin{gather*}
  \lambda x y z u. x (y z)\\
  \lambda x y z u. \ccase{\ccase{u}{x_1}{\inl{z}}{x_2}{\inr{(y z)}}}{y_1}{x(y y_1)}{y_2}{x y_2}.
\end{gather*}
These terms are $\beta\eta$-equal, but can one easily notice the
equality? In order to do so, since both terms are $\beta$-normal, one
would need to do non-trivial $\beta$- and $\eta$-expansions (see
Example~\ref{ex:2} in Section~\ref{sec:terms}).

For the lambda calculus over the restricted language of types---when the
sum type is absent---these problems do not exist, since
$\beta$-normalization followed by an $\eta$-expansion is deterministic
and produces a canonical representative for any class of
$\beta\eta$-equal terms. Deciding $\betaetaeq$ for that restricted
calculus amounts to comparing canonical forms up to syntactic
identity. 

In the presence of sums, we only have a notion of canonical
\emph{interpretation} of terms in the category of sheaves for the
Grothendieck topology over the category of constrained
environments~\cite{altenkirch_dybjer_hofmann_scott}, as well as the
sophisticated normal form of terms due to Balat, Di Cosmo, and Fiore
which is not canonical (unique) syntactically
\cite{balat_dicosmo_fiore}.  Balat et al. also provide an
implementation of a type-directed partial evaluator that normalizes
terms to their normal form, and this represented up to now the only
implemented \emph{heuristic} for deciding $\beta\eta$-equality---it is
not a full decision procedure, because the normal forms are not
canonical. We shall discuss these and the other decidability results some
more in Related work of Section~\ref{sec:conclusion}.

Treating \emph{full} $\beta\eta$-equality is hard, even if, in
practice, we often only need to treat special cases of it, such as
certain commuting conversions.

\begin{figure*}
  \centering
  \begin{multline*}
    M,N ::= x^\tau ~|~ (M^{\tau\to\sigma} N^\tau)^\sigma ~|~ (\fst{M^{\tau\times\sigma}})^\tau ~|~ (\snd{M^{\tau\times\sigma}})^\sigma ~|~ \ccase{M^{\tau+\sigma}}{x_1^\tau}{N_1^\rho}{x_2^\sigma}{N_2^\rho}^\rho ~|~ \\
     | (\lambda x^\tau.M^\sigma)^{\tau\to\sigma} ~|~ \pair{M^\tau}{N^\sigma}^{\tau\times\sigma} ~|~ (\inl{M^\tau})^{\tau+\sigma} ~|~ (\inr{M^\sigma})^{\tau+\sigma}
  \end{multline*}
  \begin{align}
    \label{beta:arrow}\tag{$\beta_\to$}(\lambda x. N) M &\betaeq N\{M/x\} & \\
    \label{beta:product}\tag{$\beta_\times$}\pi_i{\pair{M_1}{M_2}} &\betaeq M_i\\
    \label{beta:sum}\tag{$\beta_+$}\ccase{\iota_i{M}}{x_1}{N_1}{x_2}{N_2} &\betaeq N_i\{M/x_i\} & \\
    \label{eta:arrow}\tag{$\eta_\to$}N &\etaeq \lambda x. N x & x\not\in\FV(N)\\
    \label{eta:product}\tag{$\eta_\times$}N &\etaeq \pair{\fst{N}}{\snd{N}}\\
    \label{eta:sum}\tag{$\eta_+$}N\{M/x\}&\etaeq \ccase{M}{x_1}{N\{\inl{x_1}/x\}}{x_2}{N\{\inr{x_2}/x\}} & x_1,x_2\not\in\FV(N)
  \end{align}
  \caption{Terms of the $\{\to,+,\times\}$-typed lambda calculus and
    axioms of the equational theory $\betaetaeq$ between typed terms.}
  \label{fig:syntax}
\end{figure*}

\paragraph{Recognizing isomorphic types}
If we leave aside the problems of canonicity of and equality between
terms, there is a further problem at the level of \emph{types} that
makes it hard to determine whether two type signatures are essentially
the same one. Namely, although for each of the type languages
$\{\to,\times\}$ and $\{\to,+\}$ there is a very simple algorithm for
deciding \emph{type isomorphism}, for the whole of the language
$\{\to,+,\times\}$ it is only known that type isomorphism is decidable
when types are to be interpreted as finite structures, and that
without a practically implementable algorithm in
sight~\cite{ilik_sum_axioms}.

The importance of deciding type isomorphism for functional programming
has been recognized early on by Rittri~\cite{Rittri}, who proposed to
use it as a criterium for searching over a library of functional
subroutines. Two types being isomorphic means that one can switch
programs and data back and forth between the types without loss of
information. Recently, type isomorphisms have also become popular in
the community around homotopy type theory.

It is embarrassing that there are no algorithms for deciding type
isomorphism for such an ubiquitous type system. Finally, even if
finding an implementable decision procedure for the \emph{full} type
language $\{\to,+,\times\}$ were hard, might we simply be able to
cover fragments that are important in practice?

\paragraph{Organization of this paper}

In this paper, we shall be treating the two kinds of problems
explained above simultaneously, not as completely distinct ones:
traditionally, studies of canonical forms and deciding equality on
terms have used very little of the type information annotating the
terms (with the exceptions mentioned in the concluding
Section~\ref{sec:conclusion}).

We shall start by introducing in Section~\ref{sec:types} a normal form
for \emph{types}---called the \emph{exp-log normal form} (ENF)---that
preserves the isomorphism between the source and the target type; we
shall also give an implementation, a purely functional one, that can
be used as a heuristic procedure for deciding isomorphism of two
types.

Even if reducing a type to its ENF does not present a complete
decision procedure for isomorphism of \emph{types}, we shall show in
the subsequent Section~\ref{sec:equality} that it has dramatic effects
on the theory of $\beta\eta$-equality of \emph{terms}. Namely, one can
reduce the problem of showing equality for the standard $\betaetaeq$
relation to the problem of showing it for a new equality theory
$\betaetaeqe$ (Figure~\ref{fig:syntax:enf})---this later being a
\emph{specialization} of $\betaetaeq$. That is, a complete
axiomatization of $\beta\eta$-equality that is a strict subset of the
currently standard one is possible.

In Section~\ref{sec:terms}, we shall go further and describe a
minimalist calculus of terms---\emph{compact terms} at ENF type---that
can be used as an alternative to the usual lambda calculus with
sums. With its properties of a syntactic simplification of the later
(for instance, there is no lambda abstraction), the new calculus
allows a more canonical representation of terms. We show that, for a
number of interesting examples, converting lambda terms to compact
terms and comparing the obtained terms for syntactic identity provides
a simple heuristic for deciding $\betaetaeq$.

The paper is accompanied by a prototype normalizing converter between
lambda- and compact terms implemented in Coq.


\section{The exp-log normal form of types}
\label{sec:types}

The trouble with sums starts already at the level of types.  Namely,
when we consider types built from function spaces, products, and
disjoint unions (sums),
\[
\tau,\sigma ::= \chi_i ~|~ \tau \to \sigma ~|~ \tau\times\sigma ~|~ \tau+\sigma,
\]
where $\chi_i$ are atomic types (or type variables), it is not always
clear when two given types are essentially the same one. More
precisely, it is not known \emph{how} to decide whether two types are
isomorphic \cite{ilik_sum_axioms}. Although the notion of isomorphism
can be treated abstractly in Category Theory, in bi-Cartesian closed
categories, and without committing to a specific term calculus
inhabiting the types, in the language of the standard syntax and
equational theory of lambda calculus with sums
(Figure~\ref{fig:syntax}), the types $\tau$ and $\sigma$ are
isomorphic when there exist coercing lambda terms $M : \sigma\to\tau$
and $N : \tau\to\sigma$ such that
\[
\lambda x. M (N x) =_{\beta\eta} \lambda x. x
\quad \text{ and } \quad
\lambda y. N (M y) =_{\beta\eta} \lambda y. y.
\]
In other words, data/programs can be converted back and forth between
$\tau$ and $\sigma$ without loss of information.

The problem of isomorphism is in fact closely related 
to the famous Tarski High School Identities Problem
\cite{burris04,fiore06}. What is important for us here is that
\emph{types can be seen as just arithmetic expressions}: if the type
$\tau\to\sigma$ is denoted by the binary arithmetic exponentiation
$\sigma^\tau$, then every type $\rho$ denotes at the same time an
\emph{exponential} polynomial $\rho$. The difference with ordinary
polynomials is that the exponent can now also contain a (type)
variable, while exponentiation in ordinary polynomials is always of
the form $\sigma^n$ for a concrete $n\in\mathbb{N}$ i.e.
$\sigma^n =
\underbrace{\sigma\times\cdots\times\sigma}_{n\text{-times}}$.
Moreover, we have that
\[
\tau\cong\sigma \text{ implies } \mathbb{N}^+ \vDash \tau=\sigma,
\]
that is, type isomorphism implies that arithmetic equality holds for
any substitution of variables by positive natural numbers.

This hence provides an procedure for proving \emph{non}-isomorphism:
given two types, prove they are not equal as exponential polynomials,
and that means they cannot possibly be isomorphic. But, we are
interested in a positive decision procedure. Such a procedure exists
for both the languages of types $\{\to,\times\}$ and $\{\times,+\}$,
since then we have an equivalence:
\[
\tau\cong\sigma \text{ iff } \mathbb{N}^+ \vDash \tau=\sigma.
\]
Indeed, in these cases type isomorphism can not only be decided, but
also effectively built. In the case of $\{\times,+\}$, the procedure
amounts to transforming the type to disjunctive normal form, or the
(\emph{non}-exponential) polynomial to canonical form, while in that
of $\{\to,\times\}$, there is a canonical normal form obtained by type
transformation that follows currying \cite{Rittri}.

Given that it is not known whether one can find such a canonical
normal form for the full language of types \cite{ilik_sum_axioms},
what we can hope to do in practice is to find at least a
\emph{pseudo}-canonical normal form. We shall now define such a type
normal form.

The idea is to use the decomposition of the binary exponential
function $\sigma^\tau$ through unary exponentiation and
logarithm. This is a well known transformation in Analysis, where for
the natural logarithm and Euler's number $\numbere$ we would use
\[
\sigma^\tau = \numbere^{\tau\times\log\sigma}\quad\text{also written}\quad\sigma^\tau = \exp(\tau\times\log\sigma).
\]
The systematic study of such normal forms by Du Bois-Reymond described
in the book \cite{hardy} served us as inspiration.

But how exactly are we to go about using this equality for types when
it uses logarithms i.e. transcendental numbers? Luckily, we do not
have to think of real numbers at all, because what is described above
can be seen through the eyes of abstract Algebra, in exponential
fields, as a pair of mutually inverse homomorphisms $\exp$ and $\log$
between the multiplicative and additive group, satisfying
\begin{align*}
  \exp(\tau_1+\tau_2) &= \exp\tau_1\times \exp\tau_2 & \exp(\log\tau) &= \tau\\
  \log(\tau_1\times\tau_2) &= \log\tau_1+\log\tau_2 & \log(\exp\tau) &= \tau.
\end{align*}
In other words, $\exp$ and $\log$ can be considered as macro
expansions rather than unary type constructors.  Let us take the type
$\tau+\sigma\to (\tau+\sigma\to \rho) \to \rho$ from
Section~\ref{sec:introduction}, assuming for simplicity that
$\tau,\sigma,\rho$ are atomic types. It can be normalized in the
following way:
\begin{multline*}
    \tau+\sigma\to (\tau+\sigma\to \rho) \to \rho =\\
  = \left(\rho^{\rho^{\tau+\sigma}}\right)^{\tau+\sigma} =\\
  = \exp((\tau+\sigma)\log[\exp\{\exp((\tau+\sigma)\log\rho)\log\rho\}]) \rightsquigarrow\\
  \rightsquigarrow \exp((\tau+\sigma)\log[\exp\{\exp(\tau\log\rho)\exp(\sigma\log\rho)\log\rho\}]) \rightsquigarrow\\
  \rightsquigarrow \exp((\tau+\sigma)\exp(\tau\log\rho)\exp(\sigma\log\rho)\log\rho) \rightsquigarrow\\
  \rightsquigarrow \exp\big(\tau\exp(\tau\log\rho)\exp(\sigma\log\rho)\log\rho)\\\exp(\sigma\exp(\tau\log\rho)\exp(\sigma\log\rho)\log\rho\big) = \\
  = \rho^{\tau\rho^\tau\rho^\sigma}\rho^{\sigma\rho^\tau\rho^\sigma} \\
  = (\tau\times(\tau\to\rho)\times(\sigma\to\rho)\to\rho)\times(\sigma\times(\tau\to\rho)\times(\sigma\to\rho)\to\rho).
\end{multline*}

As the exp-log transformation of arrow types is at the source of this
type normalization procedure, we call the obtained normal form
\emph{the exp-log normal form (ENF)}. Be believe the link to abstract
algebra is well work keeping in mind, since it may give rise to
further cross-fertilization between mathematics and the theory of
programming languages. However, from the operational point of view,
all this transformation does is that it \emph{prioritized} and
\emph{orients} the high-school identities,
\begin{align}
  \label{eq:hsi:assoc:plus}(f+g)+h &\rightsquigarrow f+(g+h) 
  \\
  \label{eq:hsi:assoc:times}(f g) h &\rightsquigarrow f(g h) 
  \\
  \label{eq:hsi:distrib:1}f(g+h) &\rightsquigarrow f g + f h 
  \\
  \label{eq:hsi:distrib:2}(f+g) h &\rightsquigarrow f h + g h 
  \\
  \label{eq:hsi:exp:1}f^{g+h} &\rightsquigarrow f^g f^h 
  \\
  \label{eq:hsi:exp:2}(f g)^h &\rightsquigarrow f^h g^h 
  \\
  \label{eq:hsi:exp:3}(f^g)^h &\rightsquigarrow f^{h g} 
                                                                     ,
\end{align}
all of which are valid as type isomorphisms. We can thus also compute
the \emph{isomorphic} normal form of the type directly, for instance
for the second example of Section~\ref{sec:introduction}:
\begin{multline*}
  (\tau_1\to \tau_2)\to(\tau_3\to \tau_1)\to \tau_3\to \tau_4+\tau_5\to \tau_2 =\\
  = \left(\left(\left(\tau_2^{\tau_4+\tau_5}\right)^{\tau_3}\right)^{{\tau_1}^{\tau_3}}\right)^{{\tau_2}^{\tau_1}} \rightsquigarrow\\
  \rightsquigarrow \tau_2^{{\tau_2^{\tau_1}}{\tau_1^{\tau_3}}\tau_3\tau_4}\tau_2^{{\tau_2^{\tau_1}}{\tau_1^{\tau_3}}\tau_3\tau_5} =\\
  = \big(\tau_4\times\tau_3\times(\tau_3\to\tau_1)\times(\tau_1\to\tau_2)\to\tau_2\big)\times\\\big(\tau_5\times\tau_3\times(\tau_3\to\tau_1)\times(\tau_1\to\tau_2)\to\tau_2\big).
\end{multline*}
Of course, some care needs to be taken when applying the rewrite
rules, in order for the procedure to be deterministic, like giving
precedence to the type rewrite rules and normalizing
sub-expressions. To be precise, we provide a purely functional Coq
implementation below.  This is just one possible implementation of the
rewriting rules, but being purely functional and structurally
recursive (i.e. terminating) it allows us to understand the
restrictions imposed on types in normal form, as it proves the
following theorem.

\begin{theorem}\label{thm:inductive} If $\tau$ is a type in exp-log normal form, then
  $\tau\in\ENF$, where
  \begin{align*}
    \ENF\ni e &::= c ~|~ d,
  \end{align*}
  where
  \begin{align*}
    \DNF\ni d,d_i &::= c_1 + (c_2 + (\cdots + n)\cdots) & n\ge 2\\
    \CNF\ni c,c_i &::= (c_1\to b_1)\times(\cdots\times(c_n\to b_n)\cdots) & n\ge 0\\
    \Base\ni b,b_i &::= p ~|~ d,
  \end{align*}
  and $p$ denotes atomic types (type variables).
\end{theorem}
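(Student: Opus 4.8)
The plan is to derive the grammar $\ENF$ by reading off the shapes of the outputs of the Coq implementation, and to prove the theorem as the statement that the codomain of the main normalization function $\enfop$ is exactly $\ENF$. Since $\enfop$ is defined by structural recursion together with the auxiliary functions $\assocplusop$, $\distribop$ and $\explogop$ --- one for each family of high-school identities that must be oriented --- I would first attach to each of these functions the invariant describing the shape of its result: $\enf{\tau}$ lies in $\ENF$; the sum-reassociating helper $\assocplus{\cdot}{\cdot}$, applied to already-normalized arguments, returns an element of $\DNF$; the product-over-sum distributor $\distrib{\cdot}{\cdot}$ returns an element of $\ENF$ (a $\DNF$, or a $\CNF$ when no sum is present); and the exp-log rewriter $\explog{\cdot}{\cdot}$, which computes $\sigma^\tau$ from the normal forms of $\sigma$ and $\tau$, returns a $\CNF$ whose every domain is again a $\CNF$ and whose every codomain is in $\Base$. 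These invariants have to be phrased so that they compose, since the helpers are fed the results of other helpers.

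The core of the proof is then a single simultaneous induction --- on the structure of the input type, equivalently on the well-founded order witnessing termination of the implementation --- with a case analysis on the head constructor. For an atomic type the procedure returns a one-arrow $\CNF$ whose domain is the empty product and whose codomain is $p\in\Base$ (i.e.\ reading the atom $p$ as $p^1$), which is in $\CNF\subseteq\ENF$. For a product $\tau\times\sigma$ one normalizes the two factors and combines them: if both results are $\CNF$s, their concatenation is again a $\CNF$; if either is a proper $\DNF$, the distributor is invoked and its invariant gives back an element of $\ENF$. For a sum $\tau+\sigma$ one normalizes the two summands and merges them with $\assocplusop$, whose invariant yields a $\DNF$ of length $\ge 2$. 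For an arrow $\tau\to\sigma$ one normalizes $\tau$ and $\sigma$ and hands them to $\explogop$: the codomain, being an $\ENF$, is a product of arrows, so its ``logarithm'' is the corresponding sum, multiplying the exponent $\tau$ across that sum triggers distribution, and re-applying $\exp$ to each additive summand produces the final $\CNF$ --- the invariant of $\explogop$ closing the case.

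The main obstacle, carrying essentially all the real content, is the arrow case and the way it intertwines $\explogop$ with $\distribop$: computing $\sigma^\tau$ when $\sigma$ is already an $\ENF$ means taking the $\log$ of a product of $\CNF$s, distributing the exponent $\tau$ --- which may itself be a $\DNF$ --- across the resulting sum, and only then re-exponentiating each summand, and one must check that this cascade terminates and that at every intermediate stage the results still respect the $\CNF$/$\DNF$/$\Base$ stratification; in particular, that no arrow ever resurfaces in the head of a $b_i$ and that the $\CNF$s sitting in domain position are themselves fully normalized, so that the grammar is genuinely closed rather than merely a superset of the image. Since the development is total and structurally recursive, all of this is ultimately discharged by inspecting the return values of the four functions; the theorem is the externalization of the shape discipline already enforced by the implementation, and the proof is the routine --- if somewhat tedious --- simultaneous induction sketched above, with the arrow and distribution steps as the only genuinely delicate points.
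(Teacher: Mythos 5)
Your proposal is correct and is essentially the paper's own argument: the paper proves the theorem by exhibiting the structurally recursive, total Coq function $\enfop:\mathit{Formula}\to\mathit{ENF}$ together with its helpers ($\mathit{nplus}$, $\mathit{ntimes}$, $\mathit{distrib}$, $\mathit{explogn}$), whose declared return types are exactly the shape invariants you list, so the simultaneous induction you sketch is precisely what the Coq typechecker discharges when it accepts those definitions. The only cosmetic difference is that the implementation coerces the normalized codomain of an arrow to a $\CNF$ via $\mathit{enf2cnf}$ (reading a $\DNF$ $d$ as $(1\to d)\times 1$) before iterating $\mathit{explogn}$ conjunct by conjunct, but this is just a concrete rendering of the $\exp$/$\log$/distribute cascade you describe.
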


Assuming a given set of atomic types,
\begin{framed}
\begin{coqdoccode}
\coqdocnoindent
\coqdockw{Parameter} \coqdocvar{Proposition} : \coqdockw{Set}.\coqdoceol
\end{coqdoccode}
\end{framed}
\noindent
the goal is to map the unrestricted language of types, given by the
inductive definition,\footnote{May the reader to forgive us for the
  implicit use of the Curry-Howard correspondence in the Coq code
  snippets, where we refer to types and type constructors as formulas
  and formula constructors.}
\begin{framed}
\begin{coqdoccode}
\coqdocnoindent
\coqdockw{Inductive} \coqdocvar{Formula} : \coqdockw{Set} :=\coqdoceol
\coqdocnoindent
\ensuremath{|} \coqdocvar{prop} : \coqdocvar{Proposition} \ensuremath{\rightarrow} \coqdocvar{Formula}\coqdoceol
\coqdocnoindent
\ensuremath{|} \coqdocvar{disj} : \coqdocvar{Formula} \ensuremath{\rightarrow} \coqdocvar{Formula} \ensuremath{\rightarrow} \coqdocvar{Formula}\coqdoceol
\coqdocnoindent
\ensuremath{|} \coqdocvar{conj} : \coqdocvar{Formula} \ensuremath{\rightarrow} \coqdocvar{Formula} \ensuremath{\rightarrow} \coqdocvar{Formula}\coqdoceol
\coqdocnoindent
\ensuremath{|} \coqdocvar{impl} : \coqdocvar{Formula} \ensuremath{\rightarrow} \coqdocvar{Formula} \ensuremath{\rightarrow} \coqdocvar{Formula}.\coqdoceol
\end{coqdoccode}
  
\end{framed}
\noindent
into the exp-log normal form which fits in the following inductive signature.
\begin{framed}
\begin{coqdoccode}
\coqdocnoindent
\coqdockw{Inductive} \coqdocvar{CNF} : \coqdockw{Set} :=\coqdoceol
\coqdocnoindent
\ensuremath{|} \coqdocvar{top}\coqdoceol
\coqdocnoindent
\ensuremath{|} \coqdocvar{con} : \coqdocvar{CNF} \ensuremath{\rightarrow} \coqdocvar{Base} \ensuremath{\rightarrow} \coqdocvar{CNF} \ensuremath{\rightarrow} \coqdocvar{CNF}\coqdoceol
\coqdocnoindent
\coqdockw{with} \coqdocvar{DNF} : \coqdockw{Set} :=\coqdoceol
\coqdocnoindent
\ensuremath{|} \coqdocvar{two} : \coqdocvar{CNF} \ensuremath{\rightarrow} \coqdocvar{CNF} \ensuremath{\rightarrow} \coqdocvar{DNF}\coqdoceol
\coqdocnoindent
\ensuremath{|} \coqdocvar{dis} : \coqdocvar{CNF} \ensuremath{\rightarrow} \coqdocvar{DNF} \ensuremath{\rightarrow} \coqdocvar{DNF}\coqdoceol
\coqdocnoindent
\coqdockw{with} \coqdocvar{Base} : \coqdockw{Set} :=\coqdoceol
\coqdocnoindent
\ensuremath{|} \coqdocvar{prp} : \coqdocvar{Proposition} \ensuremath{\rightarrow} \coqdocvar{Base}\coqdoceol
\coqdocnoindent
\ensuremath{|} \coqdocvar{bd} : \coqdocvar{DNF} \ensuremath{\rightarrow} \coqdocvar{Base}.\coqdoceol
\coqdocemptyline
\coqdocnoindent
\coqdockw{Inductive} \coqdocvar{ENF} : \coqdockw{Set} :=\coqdoceol
\coqdocnoindent
\ensuremath{|} \coqdocvar{cnf} : \coqdocvar{CNF} \ensuremath{\rightarrow} \coqdocvar{ENF}\coqdoceol
\coqdocnoindent
\ensuremath{|} \coqdocvar{dnf} : \coqdocvar{DNF} \ensuremath{\rightarrow} \coqdocvar{ENF}.\coqdoceol
\end{coqdoccode}
  
\end{framed}
\noindent
The $\mathit{con}~c_1~b~c_2$ constructor corresponds to $b^{c_1} c_2$
or the type $(c_1\to b)\times c_2$ from Theorem~\ref{thm:inductive}.
The normalization function, $\enf{\cdot}$,
\begin{framed}
\begin{coqdoccode}
\coqdocindent{0.00em}
\coqdockw{Fixpoint} \coqdocvar{enf} (\coqdocvar{f} : \coqdocvar{Formula}) \{\coqdockw{struct} \coqdocvar{f}\} : \coqdocvar{ENF} :=\coqdoceol
\coqdocindent{1.00em}
\coqdockw{match} \coqdocvar{f} \coqdockw{with}\coqdoceol
\coqdocindent{1.00em}
\ensuremath{|} \coqdocvar{prop} \coqdocvar{p} \ensuremath{\Rightarrow} \coqdocvar{cnf} (\coqdocvar{p2c} \coqdocvar{p})\coqdoceol
\coqdocindent{1.00em}
\ensuremath{|} \coqdocvar{disj} \coqdocvar{f0} \coqdocvar{f1} \ensuremath{\Rightarrow} \coqdocvar{dnf} (\coqdocvar{nplus} (\coqdocvar{enf} \coqdocvar{f0}) (\coqdocvar{enf} \coqdocvar{f1}))\coqdoceol
\coqdocindent{1.00em}
\ensuremath{|} \coqdocvar{conj} \coqdocvar{f0} \coqdocvar{f1} \ensuremath{\Rightarrow} \coqdocvar{distrib} (\coqdocvar{enf} \coqdocvar{f0}) (\coqdocvar{enf} \coqdocvar{f1})\coqdoceol
\coqdocindent{1.00em}
\ensuremath{|} \coqdocvar{impl} \coqdocvar{f0} \coqdocvar{f1} \ensuremath{\Rightarrow} \coqdocvar{cnf} (\coqdocvar{explogn} (\coqdocvar{enf2cnf} (\coqdocvar{enf} \coqdocvar{f1})) (\coqdocvar{enf} \coqdocvar{f0}))\coqdoceol
\coqdocindent{1.00em}
\coqdockw{end}.\coqdoceol
\end{coqdoccode}

\end{framed}
\noindent
is defined using the following fixpoints:
\begin{description}
\item[nplus] which makes a flattened $n$-ary sum out of two given
  $n$-ary sums, i.e. implements the $+$-associativity rewriting
  (\ref{eq:hsi:assoc:plus}),
\item[ntimes] which is analogous to `nplus', but for products,
  implementing (\ref{eq:hsi:assoc:times}),
\item[distrib] which performs the distributivity rewriting,
  (\ref{eq:hsi:distrib:1}) and (\ref{eq:hsi:distrib:2}), and
\item[explogn] which performs the rewriting involving exponentiations,
  (\ref{eq:hsi:exp:1}), (\ref{eq:hsi:exp:2}), and
  (\ref{eq:hsi:exp:3}).
\end{description}
\begin{framed}
\begin{coqdoccode}
\coqdocindent{0.00em}
\coqdockw{Fixpoint} \coqdocvar{nplus1} (\coqdocvar{d} : \coqdocvar{DNF})(\coqdocvar{e2} : \coqdocvar{ENF}) \{\coqdockw{struct} \coqdocvar{d}\} : \coqdocvar{DNF} :=\coqdoceol
\coqdocindent{1.00em}
\coqdockw{match} \coqdocvar{d} \coqdockw{with}\coqdoceol
\coqdocindent{1.00em}
\ensuremath{|} \coqdocvar{two} \coqdocvar{c} \coqdocvar{c0} \ensuremath{\Rightarrow} \coqdockw{match} \coqdocvar{e2} \coqdockw{with}\coqdoceol
\coqdocindent{8.00em}
\ensuremath{|} \coqdocvar{cnf} \coqdocvar{c1} \ensuremath{\Rightarrow} \coqdocvar{dis} \coqdocvar{c} (\coqdocvar{two} \coqdocvar{c0} \coqdocvar{c1})\coqdoceol
\coqdocindent{8.00em}
\ensuremath{|} \coqdocvar{dnf} \coqdocvar{d0} \ensuremath{\Rightarrow} \coqdocvar{dis} \coqdocvar{c} (\coqdocvar{dis} \coqdocvar{c0} \coqdocvar{d0})\coqdoceol
\coqdocindent{8.00em}
\coqdockw{end}\coqdoceol
\coqdocindent{1.00em}
\ensuremath{|} \coqdocvar{dis} \coqdocvar{c} \coqdocvar{d0} \ensuremath{\Rightarrow} \coqdocvar{dis} \coqdocvar{c} (\coqdocvar{nplus1} \coqdocvar{d0} \coqdocvar{e2})\coqdoceol
\coqdocindent{1.00em}
\coqdockw{end}.\coqdoceol
\coqdocemptyline
\coqdocindent{0.00em}
\coqdockw{Definition} \coqdocvar{nplus} (\coqdocvar{e1} \coqdocvar{e2} : \coqdocvar{ENF}) : \coqdocvar{DNF} :=\coqdoceol
\coqdocindent{1.00em}
\coqdockw{match} \coqdocvar{e1} \coqdockw{with}\coqdoceol
\coqdocindent{1.00em}
\ensuremath{|} \coqdocvar{cnf} \coqdocvar{a} \ensuremath{\Rightarrow} \coqdockw{match} \coqdocvar{e2} \coqdockw{with}\coqdoceol
\coqdocindent{6.50em}
\ensuremath{|} \coqdocvar{cnf} \coqdocvar{c} \ensuremath{\Rightarrow} \coqdocvar{two} \coqdocvar{a} \coqdocvar{c}\coqdoceol
\coqdocindent{6.50em}
\ensuremath{|} \coqdocvar{dnf} \coqdocvar{d} \ensuremath{\Rightarrow} \coqdocvar{dis} \coqdocvar{a} \coqdocvar{d}\coqdoceol
\coqdocindent{6.50em}
\coqdockw{end}\coqdoceol
\coqdocindent{1.00em}
\ensuremath{|} \coqdocvar{dnf} \coqdocvar{b} \ensuremath{\Rightarrow} \coqdocvar{nplus1} \coqdocvar{b} \coqdocvar{e2}\coqdoceol
\coqdocindent{1.00em}
\coqdockw{end}.\coqdoceol
\coqdocemptyline
\coqdocindent{0.00em}
\coqdockw{Fixpoint} \coqdocvar{ntimes} (\coqdocvar{c1} \coqdocvar{c2} : \coqdocvar{CNF}) \{\coqdockw{struct} \coqdocvar{c1}\} : \coqdocvar{CNF} :=\coqdoceol
\coqdocindent{1.00em}
\coqdockw{match} \coqdocvar{c1} \coqdockw{with}\coqdoceol
\coqdocindent{1.00em}
\ensuremath{|} \coqdocvar{top} \ensuremath{\Rightarrow} \coqdocvar{c2}\coqdoceol
\coqdocindent{1.00em}
\ensuremath{|} \coqdocvar{con} \coqdocvar{c10} \coqdocvar{d} \coqdocvar{c13} \ensuremath{\Rightarrow} \coqdocvar{con} \coqdocvar{c10} \coqdocvar{d} (\coqdocvar{ntimes} \coqdocvar{c13} \coqdocvar{c2})\coqdoceol
\coqdocindent{1.00em}
\coqdockw{end}.\coqdoceol
\coqdocemptyline
\coqdocindent{0.00em}
\coqdockw{Fixpoint} \coqdocvar{distrib0} (\coqdocvar{c} : \coqdocvar{CNF})(\coqdocvar{d} : \coqdocvar{DNF}) : \coqdocvar{ENF} :=\coqdoceol
\coqdocindent{1.00em}
\coqdockw{match} \coqdocvar{d} \coqdockw{with}\coqdoceol
\coqdocindent{1.00em}
\ensuremath{|} \coqdocvar{two} \coqdocvar{c0} \coqdocvar{c1} \ensuremath{\Rightarrow} \coqdocvar{dnf} (\coqdocvar{two} (\coqdocvar{ntimes} \coqdocvar{c} \coqdocvar{c0}) (\coqdocvar{ntimes} \coqdocvar{c} \coqdocvar{c1}))\coqdoceol
\coqdocindent{1.00em}
\ensuremath{|} \coqdocvar{dis} \coqdocvar{c0} \coqdocvar{d0} \ensuremath{\Rightarrow} \coqdocvar{dnf} \coqdockw{match} \coqdocvar{distrib0} \coqdocvar{c} \coqdocvar{d0} \coqdockw{with}\coqdoceol
\coqdocindent{10.50em}
\ensuremath{|} \coqdocvar{cnf} \coqdocvar{c1} \ensuremath{\Rightarrow} \coqdocvar{two} (\coqdocvar{ntimes} \coqdocvar{c} \coqdocvar{c0}) \coqdocvar{c1}\coqdoceol
\coqdocindent{10.50em}
\ensuremath{|} \coqdocvar{dnf} \coqdocvar{d1} \ensuremath{\Rightarrow} \coqdocvar{dis} (\coqdocvar{ntimes} \coqdocvar{c} \coqdocvar{c0}) \coqdocvar{d1}\coqdoceol
\coqdocindent{10.50em}
\coqdockw{end}\coqdoceol
\coqdocindent{1.00em}
\coqdockw{end}.\coqdoceol
\coqdocemptyline
\coqdocindent{0.00em}
\coqdockw{Definition} \coqdocvar{distrib1} (\coqdocvar{c} : \coqdocvar{CNF})(\coqdocvar{e} : \coqdocvar{ENF}) : \coqdocvar{ENF} :=\coqdoceol
\coqdocindent{1.00em}
\coqdockw{match} \coqdocvar{e} \coqdockw{with}\coqdoceol
\coqdocindent{1.00em}
\ensuremath{|} \coqdocvar{cnf} \coqdocvar{a} \ensuremath{\Rightarrow} \coqdocvar{cnf} (\coqdocvar{ntimes} \coqdocvar{c} \coqdocvar{a})\coqdoceol
\coqdocindent{1.00em}
\ensuremath{|} \coqdocvar{dnf} \coqdocvar{b} \ensuremath{\Rightarrow} \coqdocvar{distrib0} \coqdocvar{c} \coqdocvar{b}\coqdoceol
\coqdocindent{1.00em}
\coqdockw{end}.\coqdoceol
\coqdocemptyline
\coqdocindent{0.00em}
\coqdockw{Fixpoint} \coqdocvar{explog0} (\coqdocvar{d} : \coqdocvar{Base})(\coqdocvar{d2} : \coqdocvar{DNF}) \{\coqdockw{struct} \coqdocvar{d2}\} : \coqdocvar{CNF} :=\coqdoceol
\coqdocindent{1.00em}
\coqdockw{match} \coqdocvar{d2} \coqdockw{with}\coqdoceol
\coqdocindent{1.00em}
\ensuremath{|} \coqdocvar{two} \coqdocvar{c1} \coqdocvar{c2} \ensuremath{\Rightarrow} \coqdocvar{ntimes} (\coqdocvar{con} \coqdocvar{c1} \coqdocvar{d} \coqdocvar{top}) (\coqdocvar{con} \coqdocvar{c2} \coqdocvar{d} \coqdocvar{top})\coqdoceol
\coqdocindent{1.00em}
\ensuremath{|} \coqdocvar{dis} \coqdocvar{c} \coqdocvar{d3} \ensuremath{\Rightarrow} \coqdocvar{ntimes} (\coqdocvar{con} \coqdocvar{c} \coqdocvar{d} \coqdocvar{top}) (\coqdocvar{explog0} \coqdocvar{d} \coqdocvar{d3})\coqdoceol
\coqdocindent{1.00em}
\coqdockw{end}.\coqdoceol
\coqdocemptyline
\coqdocindent{0.00em}
\coqdockw{Definition} \coqdocvar{explog1} (\coqdocvar{d} : \coqdocvar{Base})(\coqdocvar{e} : \coqdocvar{ENF}) : \coqdocvar{CNF} :=\coqdoceol
\coqdocindent{1.00em}
\coqdockw{match} \coqdocvar{e} \coqdockw{with}\coqdoceol
\coqdocindent{1.00em}
\ensuremath{|} \coqdocvar{cnf} \coqdocvar{c} \ensuremath{\Rightarrow} \coqdocvar{con} \coqdocvar{c} \coqdocvar{d} \coqdocvar{top}\coqdoceol
\coqdocindent{1.00em}
\ensuremath{|} \coqdocvar{dnf} \coqdocvar{d1} \ensuremath{\Rightarrow} \coqdocvar{explog0} \coqdocvar{d} \coqdocvar{d1}\coqdoceol
\coqdocindent{1.00em}
\coqdockw{end}.\coqdoceol
\coqdocemptyline
\coqdocindent{0.00em}
\coqdockw{Fixpoint} \coqdocvar{distribn} (\coqdocvar{d} : \coqdocvar{DNF})(\coqdocvar{e2} : \coqdocvar{ENF}) \{\coqdockw{struct} \coqdocvar{d}\} : \coqdocvar{ENF} :=\coqdoceol
\coqdocindent{1.00em}
\coqdockw{match} \coqdocvar{d} \coqdockw{with}\coqdoceol
\coqdocindent{1.00em}
\ensuremath{|} \coqdocvar{two} \coqdocvar{c} \coqdocvar{c0} \ensuremath{\Rightarrow} \coqdocvar{dnf} (\coqdocvar{nplus} (\coqdocvar{distrib1} \coqdocvar{c} \coqdocvar{e2}) (\coqdocvar{distrib1} \coqdocvar{c0} \coqdocvar{e2}))\coqdoceol
\coqdocindent{1.00em}
\ensuremath{|} \coqdocvar{dis} \coqdocvar{c} \coqdocvar{d0} \ensuremath{\Rightarrow} \coqdocvar{dnf} (\coqdocvar{nplus} (\coqdocvar{distrib1} \coqdocvar{c} \coqdocvar{e2}) (\coqdocvar{distribn} \coqdocvar{d0} \coqdocvar{e2}))\coqdoceol
\coqdocindent{1.00em}
\coqdockw{end}.\coqdoceol
\coqdocemptyline
\coqdocindent{0.00em}
\coqdockw{Definition} \coqdocvar{distrib} (\coqdocvar{e1} \coqdocvar{e2} : \coqdocvar{ENF}) : \coqdocvar{ENF} :=\coqdoceol
\coqdocindent{1.00em}
\coqdockw{match} \coqdocvar{e1} \coqdockw{with}\coqdoceol
\coqdocindent{1.00em}
\ensuremath{|} \coqdocvar{cnf} \coqdocvar{a} \ensuremath{\Rightarrow} \coqdocvar{distrib1} \coqdocvar{a} \coqdocvar{e2}\coqdoceol
\coqdocindent{1.00em}
\ensuremath{|} \coqdocvar{dnf} \coqdocvar{b} \ensuremath{\Rightarrow} \coqdocvar{distribn} \coqdocvar{b} \coqdocvar{e2}\coqdoceol
\coqdocindent{1.00em}
\coqdockw{end}.\coqdoceol
\coqdocemptyline
\coqdocindent{0.00em}
\coqdockw{Fixpoint} \coqdocvar{explogn} (\coqdocvar{c}:\coqdocvar{CNF})(\coqdocvar{e2}:\coqdocvar{ENF}) \{\coqdockw{struct} \coqdocvar{c}\} : \coqdocvar{CNF} :=\coqdoceol
\coqdocindent{1.00em}
\coqdockw{match} \coqdocvar{c} \coqdockw{with}\coqdoceol
\coqdocindent{1.00em}
\ensuremath{|} \coqdocvar{top} \ensuremath{\Rightarrow} \coqdocvar{top}\coqdoceol
\coqdocindent{1.00em}
\ensuremath{|} \coqdocvar{con} \coqdocvar{c1} \coqdocvar{d} \coqdocvar{c2} \ensuremath{\Rightarrow}\coqdoceol
\coqdocindent{2.00em}
\coqdocvar{ntimes} (\coqdocvar{explog1} \coqdocvar{d} (\coqdocvar{distrib1} \coqdocvar{c1} \coqdocvar{e2})) (\coqdocvar{explogn} \coqdocvar{c2} \coqdocvar{e2})\coqdoceol
\coqdocindent{1.00em}
\coqdockw{end}.\coqdoceol
\coqdocemptyline
\coqdocindent{0.00em}
\coqdockw{Definition} \coqdocvar{p2c} : \coqdocvar{Proposition} \ensuremath{\rightarrow} \coqdocvar{CNF} :=\coqdoceol
\coqdocindent{1.00em}
\coqdockw{fun} \coqdocvar{p} \ensuremath{\Rightarrow} \coqdocvar{con} \coqdocvar{top} (\coqdocvar{prp} \coqdocvar{p}) \coqdocvar{top}.\coqdoceol
\coqdocemptyline
\coqdocindent{0.00em}
\coqdockw{Definition} \coqdocvar{b2c} : \coqdocvar{Base} \ensuremath{\rightarrow} \coqdocvar{CNF} :=\coqdoceol
\coqdocindent{1.00em}
\coqdockw{fun} \coqdocvar{b} \ensuremath{\Rightarrow}\coqdoceol
\coqdocindent{2.00em}
\coqdockw{match} \coqdocvar{b} \coqdockw{with}\coqdoceol
\coqdocindent{2.00em}
\ensuremath{|} \coqdocvar{prp} \coqdocvar{p} \ensuremath{\Rightarrow} \coqdocvar{p2c} \coqdocvar{p}\coqdoceol
\coqdocindent{2.00em}
\ensuremath{|} \coqdocvar{bd} \coqdocvar{d} \ensuremath{\Rightarrow} \coqdocvar{con} \coqdocvar{top} (\coqdocvar{bd} \coqdocvar{d}) \coqdocvar{top}\coqdoceol
\coqdocindent{2.00em}
\coqdockw{end}.\coqdoceol
\coqdocemptyline
\coqdocindent{0.00em}
\coqdockw{Fixpoint} \coqdocvar{enf2cnf} (\coqdocvar{e}:\coqdocvar{ENF}) \{\coqdockw{struct} \coqdocvar{e}\} : \coqdocvar{CNF} :=\coqdoceol
\coqdocindent{1.00em}
\coqdockw{match} \coqdocvar{e} \coqdockw{with}\coqdoceol
\coqdocindent{1.00em}
\ensuremath{|} \coqdocvar{cnf} \coqdocvar{c} \ensuremath{\Rightarrow} \coqdocvar{c}\coqdoceol
\coqdocindent{1.00em}
\ensuremath{|} \coqdocvar{dnf} \coqdocvar{d} \ensuremath{\Rightarrow} \coqdocvar{b2c} (\coqdocvar{bd} \coqdocvar{d})\coqdoceol
\coqdocindent{1.00em}
\coqdockw{end}.\coqdoceol
\end{coqdoccode}
  
\end{framed}

From the inductive characterization of the previous theorem, it is
immediate to notice that the exp-log normal form (ENF) is in fact a
combination of disjunctive- (DNF) and conjunctive normal forms (CNF),
and their extension to also cover the function type. We shall now
apply this simple and loss-less transformation of types to the
equational theory of terms of the lambda calculus with sums.


\section{$\beta\eta$-Congruence classes at ENF type}
\label{sec:equality}

The virtue of type isomorphisms is that they preserve the equational
theory of the term calculus: an isomorphism between $\tau$ and
$\sigma$ is witnessed by a pair of lambda terms
\[T : \sigma\to\tau
  \quad\text{and}\quad 
  S : \tau\to\sigma\] such that
\[ \lambda x. T (S x) \betaetaeq \lambda x. x
  \quad\text{and}\quad
  \lambda y. S (T y) \betaetaeq \lambda y. y.\]
Therefore, when $\tau\cong \sigma$, and $\sigma$ happens to be more
canonical than $\tau$---in the sense that to any
$\beta\eta$-equivalence class of type $\tau$ corresponds a smaller one
of type $\sigma$---one can reduce the problem of deciding
$\beta\eta$-equality at $\tau$ to deciding it for a smaller subclass
of terms.
\begin{center}
  \begin{tikzpicture}[scale=0.70]
    \draw [thick] (0,0) circle (2);
    \draw (-1.8,1.7) node {$\tau$};
    \draw [thick,dashed] (0,0) circle (1);
    \foreach \alpha in {0, 60, 120, 180, 240, 300} {
      \draw [<-|,dotted] (\alpha:1.1) -- (\alpha:1.95);
    }
    \draw [->] (0,2) -- (6,1) node[midway,above] {$S$};
    \draw [->] (0,-2) -- (6,-1) node[midway,below] {$S$};
    \draw (6,0) circle (1);
    \draw (6.9,1.1) node {$\sigma$};
    \draw [->,dashed] (6,1) -- (0,1) node[midway,below] {$T$};
    \draw [->,dashed] (6,-1) -- (0,-1) node[midway,above] {$T$};
  \end{tikzpicture}
\end{center}
In the case when $\sigma=\enf{\tau}$, the equivalence classes at type
$\sigma$ will not be larger than their original classes at $\tau$,
since the main effect of the reduction to exp-log normal form is to
get rid of as many sum types on the left of an arrow as possible, and
it is known that for the $\{\times,\to\}$-typed lambda calculus one
can choose a single canonical $\eta$-long $\beta$-normal
representative out of a class of $\beta\eta$-equal terms.

Thus, from the perspective of type isomorphisms, we can observe the
partition of the set of terms of type $\tau$ into
$\betaetaeq$-congruence classes as projected upon different parallel
planes in three dimensional space, one plane for each type isomorphic
to $\tau$. If we choose to observe the planes for $\tau$ and
$\enf{\tau}$, we may describe the situation by the following figure.
\begin{center}
  \begin{tikzpicture}[scale=0.90]
    \begin{scope}[
      yshift=-60,
      every node/.append style={yslant=\yslant,xslant=\xslant},
      yslant=\yslant,xslant=\xslant
      ]
      \draw[gray, ultra thin] (-0.1,-0.1) grid (5.1,3.1); 
      \draw (0.5,0.5) circle (0.3);
      \draw (1.5,0.5) circle (0.1);
      \draw (2.5,0.5) circle (0.2);
      \draw (3.5,0.5) circle (0.01);
      \draw (4.5,0.5) circle (0.5);
      \draw (4.5,0.5) circle (0.5);
      \draw (0.5,1.5) circle (0.5);
      \draw (1.5,1.5) circle (0.25);
      \draw (2.5,1.5) circle (0.01);
      \draw (3.5,1.5) circle (0.3);
      \draw (4.5,1.5) circle (0.1);
      \draw (0.5,2.5) circle (0.01);
      \draw (1.5,2.5) circle (0.1);
      \draw (2.5,2.5) circle (0.4);
      \draw (3.5,2.5) circle (0.5);
      \draw (4.5,2.5) circle (0.01);
      \fill[black]
      (0.0,-0.5) node[right] {$\betaetaeq$-classes at type $\enf{\tau}$};
    \end{scope}
    \begin{scope}[
      yshift=0,
      every node/.append style={yslant=\yslant,xslant=\xslant},
      yslant=\yslant,xslant=\xslant
      ]
      \fill[white,fill opacity=.75] (0,0) rectangle (5,3); 
      \draw[gray, ultra thin] (-0.1,-0.1) grid (5.1,3.1); 
      \draw (0.5,0.5) circle (0.5);
      \draw[dashed] (0.5,0.5) circle (0.3);
      \draw (1.5,0.5) circle (0.5);
      \draw[dashed] (1.5,0.5) circle (0.1);
      \draw (2.5,0.5) circle (0.5);
      \draw[dashed] (2.5,0.5) circle (0.2);
      \draw (3.5,0.5) circle (0.5);
      \draw[dashed] (3.5,0.5) circle (0.01);
      \draw (4.5,0.5) circle (0.5);
      \draw[dashed] (4.5,0.5) circle (0.5);
      \draw (0.5,1.5) circle (0.5);
      \draw[dashed] (0.5,1.5) circle (0.5);
      \draw (1.5,1.5) circle (0.5);
      \draw[dashed] (1.5,1.5) circle (0.25);
      \draw (2.5,1.5) circle (0.5);
      \draw[dashed] (2.5,1.5) circle (0.01);
      \draw (3.5,1.5) circle (0.5);
      \draw[dashed] (3.5,1.5) circle (0.3);
      \draw (4.5,1.5) circle (0.5);
      \draw[dashed] (4.5,1.5) circle (0.1);
      \draw (0.5,2.5) circle (0.5);
      \draw[dashed] (0.5,2.5) circle (0.01);
      \draw (1.5,2.5) circle (0.5);
      \draw[dashed] (1.5,2.5) circle (0.1);
      \draw (2.5,2.5) circle (0.5);
      \draw[dashed] (2.5,2.5) circle (0.4);
      \draw (3.5,2.5) circle (0.5);
      \draw[dashed] (3.5,2.5) circle (0.5);
      \draw (4.5,2.5) circle (0.5);
      \draw[dashed] (4.5,2.5) circle (0.01);
      \fill[black]
      (0.0,3.5) node[right] {$\betaetaeq$-classes at type $\tau$};
    \end{scope}
  \end{tikzpicture}
\end{center}
The dashed circle depicts the compaction, if any, of a congruence
class achieved by coercing to ENF type. The single point depicts the
compaction to a singleton set, the case where a unique canonical
representative of a class of $\beta\eta$-terms exists.

We do not claim that the plane of $\enf{\tau}$ is \emph{always} the
best possible plane to choose for deciding $\betaetaeq$. Indeed, for
concrete base types there may well be further type isomorphisms to
apply (think of the role of the unit type $1$ in
$(1\to \tau+\sigma)\to\rho$) and hence a better plane than the one for
$\enf{\tau}$. However, it is a reasonably good default choice.

For the cases of types where the sum can be
completely eliminated, such as the two examples of
Section~\ref{sec:introduction}, the projection amounts to compacting
the $\beta\eta$-congruence class to a single point, a canonical normal
term of type $\enf{\tau}$.

Assuming
$\tau,\sigma,\tau_i$ are base types, the canonical representatives
for the two $\beta\eta$-congruence classes of
Section~\ref{sec:introduction} are
\[
\pair{\lambda x. (\fst{(\snd{x})})(\fst{x})}{\lambda x. (\snd{(\snd{x})})(\fst{x})}
\]
and
\[
    \pair{\lambda x. (\fst{x}) ((\fst{\snd{x}})(\fst{\snd{\snd{x}}}))}{\lambda x. (\fst{x}) ((\fst{\snd{x}})(\fst{\snd{\snd{x}}}))}.
\]
Note that, unlike \cite{balat_dicosmo_fiore}, we do not need \emph{any}
sophisticated term analysis to derive a canonical form in this kind of
cases. One may either apply the standard terms witnessing the
isomorphisms by hand, or use our normalizer described in
Section~\ref{sec:terms}.

The natural place to pick a canonical representative is thus the
$\beta\eta$-congruence class of terms at the normal type, not the
class at the original type! Moreover, beware that even if it may be
tempting to map a canonical representative along isomorphic coercions
back to the original type, the obtained representative may not be
truly canonical since there is generally more than one way to specify
the terms $S$ and $T$ that witness a type isomorphism.

Of course, not always can all sum types be eliminated by type
isomorphism, and hence not always can a class be compacted to a single
point in that way. Nevertheless, even in the case where there are
still sums remaining in the type of a term, the ENF simplifies the set
of applicable $=_{\beta\eta}$-axioms.

We can use it to get a restricted set of equations, $\betaetaeqe$, shown in
Figure~\ref{fig:syntax:enf}, which is still complete for proving full
$\beta\eta$-equality, as made precise in the following theorem.
\begin{figure*}
  \centering
  \begin{multline*}
    M,N ::= x^e ~|~ (M^{c\to b} N^c)^b ~|~ (\fst{M^{(c\to b)\times c_0}})^{c\to b} ~|~ (\snd{M^{(c\to b)\times c_0}})^{c_0} ~|~ \ccase{M^{c+d}}{x_1^c}{N_1^e}{x_2^d}{N_2^e}^e \\
    ~|~ (\lambda x^c.M^b)^{c\to b} ~|~ \pair{M^{b\to c}}{N^{c_0}}^{(b\to c)\times c_0} ~|~ (\inl{M^c})^{c+d} ~|~ (\inr{M^d})^{c+d}    
  \end{multline*}
  \begin{align}
    \label{e:beta:arrow}\tag{$\beta_{\to}^\numbere$}(\lambda x^c. N^b) M &\betaeqe N\{M/x\} &\\
    \label{e:beta:product}\tag{$\beta_\times^\numbere$}\pi_i{\pair{M_1^{b\to c}}{M_2^{c_0}}} &\betaeqe M_i & \\
    \label{e:beta:sum}\tag{$\beta_+^\numbere$}\ccase{\iota_i{M}}{x_1}{N_1}{x_2}{N_2}^e &\betaeqe N_i\{M/x_i\} & \\
    \label{e:eta:arrow}\tag{$\eta_{\to}^\numbere$}N^{c\to b} &\etaeqe \lambda x. N x & x\not\in\FV(N) \\
    \label{e:eta:product}\tag{$\eta_\times^\numbere$}N^{(c\to b)\times c_0} &\etaeqe \pair{\fst{N}}{\snd{N}} & \\
    \label{e:eta:sum}\tag{$\eta_+^\numbere$}N^b\{M^d/x\}&\etaeqe \ccase{M}{x_1}{N\{\inl{x_1}/x\}}{x_2}{N\{\inr{x_2}/x\}}^b & x_1,x_2\not\in\FV(N)\\
    \label{e:eta:pi}\tag{$\eta_{\pi}^\numbere$}\pi_i \ccase{M}{x_1}{N_1}{x_2}{N_2} &\etaeqe \ccase{M}{x_1}{\pi_i N_1}{x_2}{\pi_i N_2}^{c} & \\
    \label{e:eta:lambda}\tag{$\eta_{\lambda}^\numbere$}\lambda y. \ccase{M}{x_1}{N_1}{x_2}{N_2} &\etaeqe \ccase{M}{x_1}{\lambda y. N_1}{x_2}{\lambda y. N_2}^{c\to b} & y\not\in\FV(M)
  \end{align}
  \caption{Lambda terms of ENF type and the equational theory
    $\betaetaeqe$.}
  \label{fig:syntax:enf}
\end{figure*}

\begin{theorem} Let $P,Q$ be terms of type $\tau$ and let
  $S:\tau\to\enf{\tau}, T:\enf{\tau}\to\tau$ be a witnessing pair of
  terms for the isomorphism $\tau\cong\enf{\tau}$. Then,
  $P \betaetaeq Q$ if and only if $SP \betaetaeq^{\numbere} SQ$ and if
  and only if $T(SP) \betaetaeq T(SQ)$.
\end{theorem}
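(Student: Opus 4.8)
The plan is to establish the cycle of implications
$T(SP)\betaetaeq T(SQ)\Rightarrow P\betaetaeq Q\Rightarrow SP\betaetaeqe SQ\Rightarrow T(SP)\betaetaeq T(SQ)$; together with the congruence of $\betaetaeq$ this yields the two stated equivalences. The first and third links are routine and I would dispatch them immediately. From $\lambda x.\,T(Sx)\betaetaeq\lambda x.\,x$ and $\lambda y.\,S(Ty)\betaetaeq\lambda y.\,y$, an application to a variable followed by $\beta_\to$ gives $T(Sz)\betaetaeq z$ and $S(Tz)\betaetaeq z$, so $T(SP)\betaetaeq T(SQ)$ implies $P\betaetaeq T(SP)\betaetaeq T(SQ)\betaetaeq Q$. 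For the third link I would note that $\betaetaeqe$ is \emph{sound} for $\betaetaeq$: the six $\beta$- and $\eta$-axioms of Figure~\ref{fig:syntax:enf} are the corresponding axioms of Figure~\ref{fig:syntax} with their type annotations restricted to ENF shapes, and the commuting conversions $\eta_\pi^\numbere$, $\eta_\lambda^\numbere$ are derived from the unrestricted $\eta_+$ used in the contexts $\pi_i[-]$ and $\lambda y.[-]$ and then simplified with $\beta_+$; hence $SP\betaetaeqe SQ\Rightarrow SP\betaetaeq SQ\Rightarrow T(SP)\betaetaeq T(SQ)$ by soundness and congruence.

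The real work is the middle link --- the \emph{completeness} of $\betaetaeqe$ --- so that is where I would concentrate. Congruence turns $P\betaetaeq Q$ into $SP\betaetaeq SQ$, and $SP,SQ$ may be taken inside the grammar of Figure~\ref{fig:syntax:enf} (that the coercions for $\tau\cong\enf{\tau}$ supplied by the ENF construction of Section~\ref{sec:types} put every subterm at an ENF type, possibly after a $\beta$-normalization that $\betaetaeqe$ already performs, is part of the setup). So the goal becomes: \emph{for terms $M,N$ all of whose subterms carry an ENF type, $M\betaetaeq N$ implies $M\betaetaeqe N$.} The first ingredient I would prove is a lemma: on ENF types the \emph{unrestricted} rule $\eta_+$, with body $N$ of \emph{any} ENF type $e$ rather than only a $\Base$ type, is already a theorem of $\betaetaeqe$. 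The proof goes by induction on $e$ in the shape given by Theorem~\ref{thm:inductive}: when $e$ is a $\Base$ type (in particular a $\DNF$ type) the rule is literally $\eta_+^\numbere$; when $e=(c\to b)\times c_0$ one splits $N$ by $\eta_\times^\numbere$, applies the induction hypothesis to $\fst{N}$ and $\snd{N}$, and fuses the two resulting $\delta$-terms back into one using $\eta_\times^\numbere$, $\eta_\pi^\numbere$ and $\beta_\times^\numbere$; when $e=c\to b$ one writes $N=\lambda y.\,Ny$ by $\eta_\to^\numbere$, applies the induction hypothesis to $Ny:b$, and pulls $\lambda$ out of the $\delta$ by $\eta_\lambda^\numbere$. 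With this lemma in hand, every individual axiom of $\betaetaeq$, once it is specialized to a redex occurring inside an ENF-typed term, becomes a consequence of $\betaetaeqe$.

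What remains --- and this is the step I expect to be the main obstacle --- is to pass from single axioms to entire derivations, since an arbitrary $\betaetaeq$-derivation of $M\betaetaeq N$ may pass through terms whose subterms are not at ENF type, where the $\betaetaeqe$-axioms do not directly apply. I would handle this by a normalization argument: $\betaetaeqe$ is already strong enough to $\beta$-normalize any ENF-typed term, to $\eta$-expand it at $\to$ and $\times$, and, via the commuting conversions, to float every $\delta$ into the single position that the ENF type dictates (under the root, or under the $\lambda$'s of a $c_i\to b_i$ factor), so $\betaetaeqe$ computes a normal form that is essentially a compact term of Section~\ref{sec:terms}; it then suffices to show that this normal form depends only on the $\betaetaeq$-class of its argument, whence $M\betaetaeq N$ forces $M$ and $N$ to have the same $\betaetaeqe$-normal form. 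Establishing that $\betaetaeqe$-invariance is the crux, and it is exactly here that Theorem~\ref{thm:inductive} does the essential work: in an ENF type every surviving sum sits either at the result or as a $\Base$ occurrence under a chain of $\lambda$'s and $\pi_i$'s, i.e.\ in a position that $\eta_+^\numbere$, $\eta_\pi^\numbere$ and $\eta_\lambda^\numbere$ govern completely, so the awkward instances of sum equality that in the full calculus require the normal-form analysis of Balat, Di Cosmo and Fiore \cite{balat_dicosmo_fiore} never occur. Concretely, I expect to derive the invariance either by a normalization-by-evaluation tailored to ENF types, or from the soundness and completeness of the compact-term converter of Section~\ref{sec:terms} combined with a known completeness theorem for $\betaetaeq$ such as \cite{altenkirch_dybjer_hofmann_scott}.
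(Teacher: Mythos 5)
Your central lemma and its proof are exactly the paper's proof: the paper observes that $\eta_\pi^\numbere$ and $\eta_\lambda^\numbere$ are special cases of \ref{eta:sum}, so the only axiom of $\betaetaeq$ not already present in $\betaetaeqe$ is the unrestricted \ref{eta:sum} with body $N$ at a CNF type $c$, and it derives that instance by induction on $c$ with precisely your two cases ($\eta_\times^\numbere$ and $\eta_\pi^\numbere$ for $(c\to b)\times c_0$; $\eta_\to^\numbere$ and $\eta_\lambda^\numbere$ for $c\to b$; the $\Base$ case being the literal $\eta_+^\numbere$). Where you differ is in what you take to remain after that lemma. The paper considers the proof finished on the strength of a single sentence --- ``since the set of terms of ENF type is a subset of all typable terms, it suffices to show that all $\betaetaeq$-equations that apply to terms of ENF type can be derived already by the $\betaetaeqe$-equations'' --- and contains nothing corresponding to your third paragraph. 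Your worry there is legitimate: an unrestricted $\betaetaeq$-derivation between two ENF-typed terms can pass through intermediate terms having subterms at non-ENF types (a backwards $\beta_\to$ step can introduce a subterm of arbitrary type), so axiom-wise derivability does not formally yield conservativity over whole derivations. But your proposed repair (a $\betaetaeqe$-normalization to compact terms plus invariance of the normal form under $\betaetaeq$) is only sketched, and is, as you yourself note, where the real difficulty would lie. So relative to the paper you are not missing anything; you reproduce its argument in full and additionally flag, without closing, a step that the paper silently passes over.
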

\begin{proof}
  Since the set of terms of ENF type is a subset of all typable terms,
  it suffices to show that all $\betaetaeq$-equations that apply to
  terms of ENF type can be derived already by the
  $\betaetaeqe$-equations. 

  Notice first that \ref{e:eta:lambda} and \ref{e:eta:pi} are special
  cases of \ref{eta:sum}, so, in fact, the only axiom missing from
  $\betaetaeqe$ is \ref{eta:sum} itself,
  \begin{multline*}
  N\{M/x\} \etaeqe \ccase{M}{x_1}{N\{\inl{x_1}/x\}}{x_2}{N\{\inr{x_2}/x\}} \\ (x_1,x_2\not\in\FV(N)),
  \end{multline*}
  when $N$ is of type $c$; the case of $N$ of type $d$ is covered
  directly by the \ref{e:eta:sum}-axiom. We thus show that the
  \ref{eta:sum}-axiom is derivable from the $\betaetaeqe$-ones by
  induction on $c$.
  \begin{description}
  \item[Case for $N$ of type $(c\to b)\times c_0$.]
    \begin{align*}
      & N\{M/x\} \\
      \etaeqe & \pair{\fst{(N\{M/x\})}}{\snd{(N\{M/x\})}} \quad \text{by \ref{e:eta:product}} \\
      = & \pair{(\fst{N})\{M/x\}}{(\snd{N})\{M/x\}}  \\
      \etaeqe & \langle{\ccase{M}{x_1}{(\fst{N})\{\inl{x_1}/x\}}{x_2}{(\fst{N})\{\inr{x_2}/x\}}},  \\
              & ~{\ccase{M}{x_1}{(\snd{N})\{\inl{x_1}/x\}}{x_2}{(\snd{N})\{\inr{x_2}/x\}}}\rangle  \quad \text{by IH} \\
      \betaetaeqe & \langle{\fst{(\ccase{M}{x_1}{N\{\inl{x_1}/x\}}{x_2}{N\{\inr{x_2}/x\}})}},  \\
      & ~\snd{(\ccase{M}{x_1}{N\{\inl{x_1}/x\}}{x_2}{N\{\inr{x_2}/x\}})}\rangle  \quad \text{by \ref{e:eta:pi}}\\
      \etaeqe & \ccase{M}{x_1}{N\{\inl{x_1}/x\}}{x_2}{N\{\inr{x_2}/x\}} \quad \text{by \ref{e:eta:product}}
    \end{align*}

  \item[Case for $N$ of type $c\to b$.] 
    \begin{align*}
      & N\{M/x\} \\
      \etaeqe & \lambda y. (N\{M/x\}) y \quad \text{by \ref{e:eta:arrow}}\\
      =~ & \lambda y. (N y)\{M/x\} \quad \text{for } y\not\in\FV{(N\{M/x\})}\\
      \etaeqe & \lambda y. \ccase{M}{x_1}{(Ny)\{\inl{x_1}/x\}}{x_2}{(Ny)\{\inr{x_2}/x\}}  \quad \text{by \ref{e:eta:sum}}\\
      \etaeqe & \ccase{M}{x_1}{(\lambda y. Ny)\{\inl{x_1}/x\}}{x_2}{(\lambda y. Ny)\{\inr{x_2}/x\}}  ~ \text{by \ref{e:eta:lambda}}\\
      \etaeqe & \ccase{M}{x_1}{N\{\inl{x_1}/x\}}{x_2}{N\{\inr{x_2}/x\}}  \quad \text{by \ref{e:eta:arrow}}
    \end{align*}
  \end{description}
\end{proof}

The transformation of terms to ENF type thus allows to simplify the
(up to now) standard axioms of $\betaetaeq$. The new axioms are
complete for $\betaetaeq$ in spite of them being only \emph{special cases}
of the old ones. A notable feature is that we get to disentangle the
left-hand side and right-hand side of the equality axioms: for
instance, the right-hand side of \ref{beta:arrow}-axiom can no longer
overlap with the left-hand side of the \ref{eta:sum}-axiom, due to
typing restrictions on the term $M$.

One could get rid of \ref{e:eta:pi} and \ref{e:eta:lambda} if one had
a version of $\lambda$-calculus resistant to these permuting
conversions. The syntax of such a lambda calculus would further be
simplified if, instead of binary, one had $n$-ary sums and
products. In that case, there would be no need for variables of sum
type at all (currently they can only be introduced by the second
branch of $\delta$). We would in fact get a calculus with only
variables of type $c\to b$, and that would still be suitable as a
small theoretical core of functional programming languages.


\section{A compact representation of terms at ENF type}
\label{sec:terms}

It is the subject of this section to show that the desiderata for a
more canonical calculus from the previous paragraph can in fact be
achieved. We shall define a new representation of lambda terms, that
we have isolated as the most compact syntax possible during the formal
Coq development of a normalizer of terms at ENF type. The description
of the normalizer itself will be left for the second part of this
section, Subsection~\ref{subsec:nbe}. In the first part of the
section, we shall demonstrate the value of representing terms in our
calculus on a number of examples. Comparing our normal form for
\emph{syntactical identity} provides a first such heuristic for
deciding $\betaetaeq$ in the presence of sums.

Before we continue with the presentation of the new calculus, for the
sake of precision, we give the formal representation of terms of the
two term calculi. First, we represent the usual lambda calculus with
sums.
\begin{framed}
\begin{coqdoccode}
\coqdocindent{0.00em}
\coqdockw{Inductive} \coqdocvar{ND} : \coqdocvar{list} \coqdocvar{Formula} \ensuremath{\rightarrow} \coqdocvar{Formula} \ensuremath{\rightarrow} \coqdockw{Set} :=\coqdoceol
\coqdocindent{0.00em}
\ensuremath{|} \coqdocvar{hyp} : \coqdockw{\ensuremath{\forall}} \{\coqdocvar{Gamma} \coqdocvar{A}\},\coqdoceol
\coqdocindent{1.00em}
\coqdocvar{ND} (\coqdocvar{A} :: \coqdocvar{Gamma}) \coqdocvar{A}\coqdoceol
\coqdocindent{0.00em}
\ensuremath{|} \coqdocvar{wkn} : \coqdockw{\ensuremath{\forall}} \{\coqdocvar{Gamma} \coqdocvar{A} \coqdocvar{B}\},\coqdoceol
\coqdocindent{1.00em}
\coqdocvar{ND} \coqdocvar{Gamma} \coqdocvar{A} \ensuremath{\rightarrow} \coqdocvar{ND} (\coqdocvar{B} :: \coqdocvar{Gamma}) \coqdocvar{A}\coqdoceol
\coqdocindent{0.00em}
\ensuremath{|} \coqdocvar{lam} : \coqdockw{\ensuremath{\forall}} \{\coqdocvar{Gamma} \coqdocvar{A} \coqdocvar{B}\},\coqdoceol
\coqdocindent{1.00em}
\coqdocvar{ND} (\coqdocvar{A} :: \coqdocvar{Gamma}) \coqdocvar{B} \ensuremath{\rightarrow} \coqdocvar{ND} \coqdocvar{Gamma} (\coqdocvar{impl} \coqdocvar{A} \coqdocvar{B})\coqdoceol
\coqdocindent{0.00em}
\ensuremath{|} \coqdocvar{app} : \coqdockw{\ensuremath{\forall}} \{\coqdocvar{Gamma} \coqdocvar{A} \coqdocvar{B}\},\coqdoceol
\coqdocindent{1.00em}
\coqdocvar{ND} \coqdocvar{Gamma} (\coqdocvar{impl} \coqdocvar{A} \coqdocvar{B}) \ensuremath{\rightarrow} \coqdocvar{ND} \coqdocvar{Gamma} \coqdocvar{A} \ensuremath{\rightarrow} \coqdocvar{ND} \coqdocvar{Gamma} \coqdocvar{B}\coqdoceol
\coqdocindent{0.00em}
\ensuremath{|} \coqdocvar{pair} : \coqdockw{\ensuremath{\forall}} \{\coqdocvar{Gamma} \coqdocvar{A} \coqdocvar{B}\},\coqdoceol
\coqdocindent{1.00em}
\coqdocvar{ND} \coqdocvar{Gamma} \coqdocvar{A} \ensuremath{\rightarrow} \coqdocvar{ND} \coqdocvar{Gamma} \coqdocvar{B} \ensuremath{\rightarrow} \coqdocvar{ND} \coqdocvar{Gamma} (\coqdocvar{conj} \coqdocvar{A} \coqdocvar{B})\coqdoceol
\coqdocindent{0.00em}
\ensuremath{|} \coqdocvar{fst} : \coqdockw{\ensuremath{\forall}} \{\coqdocvar{Gamma} \coqdocvar{A} \coqdocvar{B}\},\coqdoceol
\coqdocindent{1.00em}
\coqdocvar{ND} \coqdocvar{Gamma} (\coqdocvar{conj} \coqdocvar{A} \coqdocvar{B}) \ensuremath{\rightarrow} \coqdocvar{ND} \coqdocvar{Gamma} \coqdocvar{A}\coqdoceol
\coqdocindent{0.00em}
\ensuremath{|} \coqdocvar{snd} : \coqdockw{\ensuremath{\forall}} \{\coqdocvar{Gamma} \coqdocvar{A} \coqdocvar{B}\},\coqdoceol
\coqdocindent{1.00em}
\coqdocvar{ND} \coqdocvar{Gamma} (\coqdocvar{conj} \coqdocvar{A} \coqdocvar{B}) \ensuremath{\rightarrow} \coqdocvar{ND} \coqdocvar{Gamma} \coqdocvar{B}\coqdoceol
\coqdocindent{0.00em}
\ensuremath{|} \coqdocvar{inl} : \coqdockw{\ensuremath{\forall}} \{\coqdocvar{Gamma} \coqdocvar{A} \coqdocvar{B}\},\coqdoceol
\coqdocindent{1.00em}
\coqdocvar{ND} \coqdocvar{Gamma} \coqdocvar{A} \ensuremath{\rightarrow} \coqdocvar{ND} \coqdocvar{Gamma} (\coqdocvar{disj} \coqdocvar{A} \coqdocvar{B})\coqdoceol
\coqdocindent{0.00em}
\ensuremath{|} \coqdocvar{inr} : \coqdockw{\ensuremath{\forall}} \{\coqdocvar{Gamma} \coqdocvar{A} \coqdocvar{B}\},\coqdoceol
\coqdocindent{1.00em}
\coqdocvar{ND} \coqdocvar{Gamma} \coqdocvar{B} \ensuremath{\rightarrow} \coqdocvar{ND} \coqdocvar{Gamma} (\coqdocvar{disj} \coqdocvar{A} \coqdocvar{B})\coqdoceol
\coqdocindent{0.00em}
\ensuremath{|} \coqdocvar{cas} : \coqdockw{\ensuremath{\forall}} \{\coqdocvar{Gamma} \coqdocvar{A} \coqdocvar{B} \coqdocvar{C}\},\coqdoceol
\coqdocindent{1.00em}
\coqdocvar{ND} \coqdocvar{Gamma} (\coqdocvar{disj} \coqdocvar{A} \coqdocvar{B}) \ensuremath{\rightarrow}\coqdoceol
\coqdocindent{1.00em}
\coqdocvar{ND} (\coqdocvar{A} :: \coqdocvar{Gamma}) \coqdocvar{C} \ensuremath{\rightarrow} \coqdocvar{ND} (\coqdocvar{B} :: \coqdocvar{Gamma}) \coqdocvar{C} \ensuremath{\rightarrow}\coqdoceol
\coqdocindent{1.00em}
\coqdocvar{ND} \coqdocvar{Gamma} \coqdocvar{C}.\coqdoceol
\end{coqdoccode}
  
\end{framed}
\noindent
The constructors are self-explanatory, except for \textit{hyp} and
\textit{wkn}, which are in fact used to denote de Bruijn indices:
\textit{hyp} denotes $0$, while \textit{wkn} is the successor. For
instance, the term $\lambda x y z. y$ is represented as \textit{lam
  (lam (lam (wkn hyp)))} i.e. \textit{lam (lam (lam 1))}.

DeBruijn indices creep in as the simplest way to work with binders in
Coq, and although they may reduce readability, they solve the problem
with $\alpha$-conversion of terms.

Next is our compact representation of terms, defined by the following
simultaneous inductive definition of terms at base type
(\textit{HSb}), together with terms at product type
(\textit{HSc}). These later are simply finite lists of
\textit{HSb}-terms.
\begin{framed}
\begin{coqdoccode}
\coqdocindent{0.00em}
\coqdockw{Inductive} \coqdocvar{HSc} : \coqdocvar{CNF} \ensuremath{\rightarrow} \coqdockw{Set} :=\coqdoceol
\coqdocindent{0.00em}
\ensuremath{|} \coqdocvar{tt} : \coqdocvar{HSc} \coqdocvar{top}\coqdoceol
\coqdocindent{0.00em}
\ensuremath{|} \coqdocvar{pair} : \coqdockw{\ensuremath{\forall}} \{\coqdocvar{c1} \coqdocvar{b} \coqdocvar{c2}\},
\coqdocvar{HSb} \coqdocvar{c1} \coqdocvar{b} \ensuremath{\rightarrow}
\coqdocvar{HSc} \coqdocvar{c2} \ensuremath{\rightarrow} \coqdocvar{HSc} (\coqdocvar{con} \coqdocvar{c1} \coqdocvar{b} \coqdocvar{c2})\coqdoceol
\coqdocindent{0.00em}
\coqdockw{with} \coqdocvar{HSb} : \coqdocvar{CNF} \ensuremath{\rightarrow} \coqdocvar{Base} \ensuremath{\rightarrow} \coqdockw{Set} :=\coqdoceol
\coqdocindent{0.00em}
\ensuremath{|} \coqdocvar{app} : \coqdockw{\ensuremath{\forall}} \{\coqdocvar{p} \coqdocvar{c0} \coqdocvar{c1} \coqdocvar{c2}\},\coqdoceol
\coqdocindent{1.00em}
\coqdocvar{HSc} (\coqdocvar{explogn} \coqdocvar{c1} (\coqdocvar{cnf} (\coqdocvar{ntimes} \coqdocvar{c2} (\coqdocvar{con} \coqdocvar{c1} (\coqdocvar{prp} \coqdocvar{p}) \coqdocvar{c0})))) \ensuremath{\rightarrow}\coqdoceol
\coqdocindent{1.00em}
\coqdocvar{HSb} (\coqdocvar{ntimes} \coqdocvar{c2} (\coqdocvar{con} \coqdocvar{c1} (\coqdocvar{prp} \coqdocvar{p}) \coqdocvar{c0})) (\coqdocvar{prp} \coqdocvar{p})\coqdoceol
\coqdocindent{0.00em}
\ensuremath{|} \coqdocvar{cas} : \coqdockw{\ensuremath{\forall}} \{\coqdocvar{d} \coqdocvar{b} \coqdocvar{c0} \coqdocvar{c1} \coqdocvar{c2} \coqdocvar{c3}\},\coqdoceol
\coqdocindent{1.00em}
\coqdocvar{HSc} (\coqdocvar{explogn} \coqdocvar{c1} (\coqdocvar{cnf} (\coqdocvar{ntimes} \coqdocvar{c2} (\coqdocvar{con} \coqdocvar{c1} (\coqdocvar{bd} \coqdocvar{d}) \coqdocvar{c0})))) \ensuremath{\rightarrow}\coqdoceol
\coqdocindent{1.00em}
\coqdocvar{HSc} (\coqdocvar{explogn} (\coqdocvar{explog0} \coqdocvar{b} \coqdocvar{d})\coqdoceol
\coqdocindent{4.00em}
(\coqdocvar{cnf} (\coqdocvar{ntimes} \coqdocvar{c3} (\coqdocvar{ntimes} \coqdocvar{c2} (\coqdocvar{con} \coqdocvar{c1} (\coqdocvar{bd} \coqdocvar{d}) \coqdocvar{c0}))))) \ensuremath{\rightarrow}\coqdoceol
\coqdocindent{1.00em}
\coqdocvar{HSb} (\coqdocvar{ntimes} \coqdocvar{c3} (\coqdocvar{ntimes} \coqdocvar{c2} (\coqdocvar{con} \coqdocvar{c1} (\coqdocvar{bd} \coqdocvar{d}) \coqdocvar{c0}))) \coqdocvar{b}\coqdoceol
\coqdocindent{0.00em}
\ensuremath{|} \coqdocvar{wkn} : \coqdockw{\ensuremath{\forall}} \{\coqdocvar{c0} \coqdocvar{c1} \coqdocvar{b1} \coqdocvar{b}\},\coqdoceol
\coqdocindent{1.00em}
\coqdocvar{HSb} \coqdocvar{c0} \coqdocvar{b} \ensuremath{\rightarrow} \coqdocvar{HSb} (\coqdocvar{con} \coqdocvar{c1} \coqdocvar{b1} \coqdocvar{c0}) \coqdocvar{b}\coqdoceol
\coqdocindent{0.00em}
\ensuremath{|} \coqdocvar{inl\_two} : \coqdockw{\ensuremath{\forall}} \{\coqdocvar{c0} \coqdocvar{c1} \coqdocvar{c2}\},\coqdoceol
\coqdocindent{1.00em}
\coqdocvar{HSc} (\coqdocvar{explogn} \coqdocvar{c1} (\coqdocvar{cnf} \coqdocvar{c0})) \ensuremath{\rightarrow} \coqdocvar{HSb} \coqdocvar{c0} (\coqdocvar{bd} (\coqdocvar{two} \coqdocvar{c1} \coqdocvar{c2}))\coqdoceol
\coqdocindent{0.00em}
\ensuremath{|} \coqdocvar{inr\_two} : \coqdockw{\ensuremath{\forall}} \{\coqdocvar{c0} \coqdocvar{c1} \coqdocvar{c2}\},\coqdoceol
\coqdocindent{1.00em}
\coqdocvar{HSc} (\coqdocvar{explogn} \coqdocvar{c2} (\coqdocvar{cnf} \coqdocvar{c0})) \ensuremath{\rightarrow} \coqdocvar{HSb} \coqdocvar{c0} (\coqdocvar{bd} (\coqdocvar{two} \coqdocvar{c1} \coqdocvar{c2}))\coqdoceol
\coqdocindent{0.00em}
\ensuremath{|} \coqdocvar{inl\_dis} : \coqdockw{\ensuremath{\forall}} \{\coqdocvar{c0} \coqdocvar{c} \coqdocvar{d}\},\coqdoceol
\coqdocindent{1.00em}
\coqdocvar{HSc} (\coqdocvar{explogn} \coqdocvar{c} (\coqdocvar{cnf} \coqdocvar{c0})) \ensuremath{\rightarrow} \coqdocvar{HSb} \coqdocvar{c0} (\coqdocvar{bd} (\coqdocvar{dis} \coqdocvar{c} \coqdocvar{d}))\coqdoceol
\coqdocindent{0.00em}
\ensuremath{|} \coqdocvar{inr\_dis} : \coqdockw{\ensuremath{\forall}} \{\coqdocvar{c0} \coqdocvar{c} \coqdocvar{d}\},\coqdoceol
\coqdocindent{1.00em}
\coqdocvar{HSb} \coqdocvar{c0} (\coqdocvar{bd} \coqdocvar{d}) \ensuremath{\rightarrow} \coqdocvar{HSb} \coqdocvar{c0} (\coqdocvar{bd} (\coqdocvar{dis} \coqdocvar{c} \coqdocvar{d})).\coqdoceol
\end{coqdoccode}
  
\end{framed}
\noindent

For a more human-readable notation of our calculus, we are going to use
the following one,
\begin{align*}
  P,Q &::= \hspair{M_1}{M_n} \qquad (n\ge 0)\\
  M,M_i &::= \hsappn{n}{P} ~|~ \hscasn{n}{P}{Q} ~|~ \hswkn{M} ~|~ \hsinltwo{P} ~|~ \hsinrtwo{P} ~|~ \hsinldis{P} ~|~ \hsinrdis{M},
\end{align*}
with typing rules as follows:
\begin{gather*}
\frac{M_1:(c_1\vdash b_1) \quad\cdots\quad M_n:(c_n\vdash b_n)}{\hspair{M_1}{M_n} : (c_1\to b_1)\times\cdots\times (c_n\to b_n)} \\~\\
\frac{P:(\explogn{c_1}{\ntimes{c_2}{(c_1\to p) \times c_0}})}{\hsappn{n}{P} : ({\ntimes{c_2}{(c_1\to p)\times c_0}}\vdash{p})}
\end{gather*}
\begin{gather*}
\frac{
  \binomnb{\displaystyle P : (\explogn{c_1}{\ntimes{c_2}{(c_1\to d)\times c_0}})}
  {\displaystyle Q : (\explogn{(\explogone{b}{d})}{\ntimes{c_3}{\ntimes{c_2}{(c_1\to d)\times c_0}}})}
}{\hscasn{n}{P}{Q} : (\ntimes{c_3}{\ntimes{c_2}{(c_1\to d)\times c_0}}\vdash b)}
\end{gather*}
\begin{gather*}
\frac{M : (c_0\vdash b)}{\hswkn{M} : ((c_1\to b_1)\times c_0\vdash b)}
\end{gather*}
\begin{gather*}
\frac{P : (\explogn{c_1}{c_0})}{\inl{P} : (c_0\vdash{c_1+c_2})}\qquad
\frac{P : (\explogn{c_2}{c_0})}{\inl{P} : (c_0\vdash{c_1+c_2})}
\end{gather*}
\begin{gather*}
\frac{P : (\explogn{c}{c_0})}{\iota'_1{P} : (c_0\vdash{c+d})}\qquad
\frac{M : (c_0\vdash d)}{\iota'_2{M} : (c_0\vdash{c+d})}.
\end{gather*}
The typing rules above involve two kinds of typing judgments.
\begin{description}
\item[Judgments at base type:] Denoted $M : (c\vdash b)$, this is the
  main judgment kind, the conclusion of all but the first typing
  rule. It says that $M$ is a term of type $b$ (i.e. either an atomic
  $p$ or a disjunction type $d$) in the typing context $c$. This
  context $c$ takes over the place of the usual context $\Gamma$ and
  allows only hypotheses (variables) of type $c_i\to b_i$ to be used
  inside the term $M$.
\item[Judgments at product type:] Denoted $P : c$ or $Q : c$, this
  kind of judgment is only the conclusion of the first typing rule,
  whose sole purpose is to make a tupple of base type judgments.

  However, the judgments at product type are used as \emph{hypotheses}
  in the other typing rules, where their role is to allow $n$ premises
  to the typing rule. For this usage, they are disguised as the
  macro-expansions $\explogn{c_2}{c_1}$ or
  $\explogn{(\explogone{b}{d})}{c}$. Implemented by the Coq fixpoints
  \textit{explogn} and \textit{explog1}, these macro expansions work
  as follows:
  \begin{multline*}
    \explogn{(c_{1}\to b_{1})\times\cdots\times(c_{n}\to b_{n})}{c_0}\equiv\\
    (c_{1}\times c_0\to b_{1})\times\cdots\times(c_{n}\times c_0\to b_{n})
  \end{multline*}
  \begin{gather*}
    \explogone{b}{(c_1+\cdots+c_n)}\equiv
    (c_1\to b)\times\cdots\times(c_n\to b)
  \end{gather*}
  Note that the usage of $\explogone{b}{d}$ in the typing rule for
  $\delta$ allows the number of premises contained in $Q$ to be
  determined by the size of the sum $d$.
\end{description}
The typing rules also rely on a implicit variable convention, where a
variable $x_n$ actually denotes the variable whose deBruijn index is
$n$ (we start counting from $0$).
\begin{description}
\item[Variables as deBruijn indices:] The variable $x_n$ in the rules
  for $\hsappn{n}{P}$ and $\hscasn{n}{P}{Q}$ represent the hypothesis
  $c_1\to p$ and $c_1\to d$. For concrete $c_2, c_3$, the subscript
  $n$ means that the variable represents the $n$-th hypothesis of the
  form $c\to b$, counting from left to right and starting from 0, in
  the context of the term $P$, or the $n+1$-st, in the context of the
  term $Q$.
\end{description}

We shall motivate our syntax in comparison to the syntax of the lambda
calculus from Figure~\ref{fig:syntax:enf}, by considering in order all
term constructors of the later.




\begin{description}
\item[$x^e$] Since $e\in\ENF$, either $e=c\in\CNF$ or
  $e=d\in\DNF$. Variables of type $d$ only appear as binders in the
  second branch of $\delta$, so if we have $n$-ary instead of binary
  $\delta$'s, the only type a variable $x$ could have will be a
  $c$. But, since $c$ is always of the form
  $(c_1\to b_1)\times\cdots\times(c_n\to b_n)$, a variable $x^c$ could
  be written as a tupple of $n$ variables $x_i$ of types $c_i\to b_i$.
  Moreover, as we want our terms to always be $\eta$-expanded, and
  $c_i\to b_i$ is an arrow type, we will not have a separate syntactic
  category of terms for variables $x_i$ in the new calculus, but they
  will rather be encoded/merged with either the category of
  applications $\hsappn{i}{P}$ (when $b$ is an atomic type $p$), or
  the category of case analysis $\hscasn{i}{P}{Q}$ (when $b\in\DNF$),
  the two new constructors explained below.
\item[$M^{c\to b} N^c$] We shall only need this term constructor at
  type $b=p$, since if $b\in\DNF$ the term $MN$ would not be
  $\eta$-long (we want it to be represented by a
  $\delta(MN, \cdots)$). As we realized during our Coq development, we
  shall only need the case $M=x$, as there will be no other syntactic
  element of type $c\to p$ (there will be no projections $\pi_i$ left,
  while the $\delta$ will only be necessary at type $b$). In
  particular, the application $x \langle\rangle$ can be used to
  represent the old category of variables, where $\langle\rangle$ is
  the empty tupple of unit type $1$ (the nullary product).
\item[$(\fst{M^{(c\to b)\times c_0}})^{c\to b}$] If $M$ is
  $\eta$-expanded, as we want all terms to be, this term would only
  create a $\beta$-redex, and so will not be a part of the new syntax,
  as we are building a syntax for $\beta$-normal and $\eta$-long
  terms.
\item[$(\snd{M^{(c\to b)\times c_0}})^{c_0}$] When product types are
  represented as $n$-ary, the same reasoning as for $\fst{}$ applies,
  so $\snd{}$ will not be part of the new syntax.
\item[$\ccase{M^{c+d}}{x_1^c}{N_1^e}{x_2^d}{N_2^e}^e$] This
  constructor is only needed at the type $e=b$, a consequence of the
  fact that the \ref{e:eta:sum}-axiom is specialized to type $b$: the
  axioms \ref{e:eta:pi} and \ref{e:eta:lambda} will not expressible in
  the new syntax, since it will not contain $\pi_i$, as we saw, and it
  will not contain $\lambda$, as we shall see. We will also only need
  the scrutinee $M$ to be of the form $x N$, like it the case of
  application; this additional restriction was not possible to see
  upfront, but only once we used Coq to analyze the terms needed for
  the normalizer.

  The new constructor $\hscasn{n}{P}{Q}$ is thus like the old
  $\delta(x_n P, \cdots)$, except that $Q$ regroups in the form of an
  $n$-ary tupple all the possible branches of the pattern matching
  (sum types will also be $n$-ary, not binary like before).
\item[$(\lambda x^c.M^b)^{c\to b}$] This terms constructor is already
  severely restricted (for instance only one variable $x$ can be
  abstracted), thanks to the restrictions on the left- and right-hand
  sides of the function type. But, as we found out during the Coq
  development, somewhat to our surprise, there is no need for
  $\lambda$-abstraction in our syntax. When reverse-normalizing from
  our calculus to the standard lambda calculus (see the six examples
  below), $\lambda$'s can be reconstructed thanks to the typing
  information.
\item[$\pair{M^{b\to c}}{N^{c_0}}^{(b\to c)\times c_0}$] This
  constructor will be maintained, corresponding to the only typing
  rule with conclusion a judgment of product type, but it will become
  $n$-ary, $\langle M_1,\ldots,M_n\rangle$. In particular, we may have
  the \emph{nullary} tupple $\hstt$ of the null product type
  (i.e. unit type $1$).
\item[$(\inl{M^c})^{c+d}, (\inr{M^d})^{c+d}$] These constructors will
  be maintained, but will be \emph{duplicated}: the new
  $\hsinltwo{},\hsinrtwo{}$ will only be used to construct a binary
  sum $c_1+c_2$ (this is the base case of sum constructors which must
  be at least binary by construction), while the
  $\hsinldis{},\hsinrdis{}$ will be used to construct sums of the form
  $c+d$.
\end{description}

We shall now show a number of examples that our compact term
representation manages to represent canonically. We will also show
cases when $\beta\eta$-equality can \emph{not} be decided
using bringing terms to the compact normal form. For simplicity, all
type variables ($a,b,c,d,e,f,g,p,q,r,s,i,j,k,l$) are assumed to be of
atomic type, none of them denoting members of \textit{Base},
\textit{CNF}, and \textit{DNF}, anymore, and for the rest of this
subsection.

\begin{convention} We shall adopt the convention of writing the type
  $1\to p$ as $p$ ($1$ is the unit type i.e. the nullary product
  type), writing the application to a nullary pair $\hsappn{n}{\hstt}$
  as $x_n$, and writing a \emph{singleton} pair $\langle M \rangle$ as
  just $M$. Hence, for instance, an application of some term $M$ to a
  singleton pair, containing an application of a term $N$ to a nullary
  pair, $M\langle N \hstt \rangle$, will be written as the more
  readable $M N$ corresponding to the usual $\lambda$-calculus
  intuitions.
\end{convention}

\begin{example} This is the first example from the introduction,
  concerning the relative positions of $\lambda$'s, $\delta$'s, and
  applications. The $\beta\eta$-equal terms
  \begin{gather}
    \lambda x. \lambda y. y \ccase{x}{z}{\inl{z}}{z}{\inr{z}}\\
    \label{ex1:output}\lambda x. \lambda y. \ccase{x}{z}{y (\inl{z})}{z}{y (\inr{z})}\\
    \lambda x. \ccase{x}{z}{\lambda y. y (\inl{z})}{z}{\lambda y. y (\inr{z})}\\
    \lambda x. \lambda y. y x
  \end{gather}
  at type
  \[
    (p+q)\to ((p+q)\to r)\to r,
  \]
  are all normalized to the same canonical representation
  \begin{gather}
    \hspairtwo{\hsappn{0}{\hspairone{\hsappn{2}{}}}}{\hsappn{1}{\hspairone{\hsappn{2}{}}}}
  \end{gather}
  at the ENF type
  \begin{multline*}
    ((p\to r)\times(q\to r)\times p\to r)\times\\((p\to r)\times(q\to
    r)\times q\to r)
  \end{multline*}
  which can be reverse-normalized back to (\ref{ex1:output}). However,
  the point is not that~(\ref{ex1:output}) is somehow better than the
  other 3 terms, but that a canonical representation should be sought
  at the ENF type, not the original type! This remark is valid in
  general, and in particular for the other examples below.
\end{example}


\begin{example}\label{ex:2}
  This is the second example from the introduction (Example~6.2.4.2
  from \cite{balat_dicosmo_fiore}). The $\beta\eta$-equal terms
  \begin{gather}
  \lambda x y z u. x (y z)\label{eq:bdcf1}\\
  \label{ex2:output}\lambda x y z u. \ccase{u}{x_1}{x (y z)}{x_2}{x (y z)}
  \end{gather}
  \begin{multline}
  \lambda x y z u. \ccaseone{u}{x_1}{\ccase{\inl{z}}{y_1}{x (y y_1)}{y_2}{x y_2}}{x_2}{\ccase{\inr{y z}}{y_1}{x (y y_1)}{y_2}{x y_2}}    
  \end{multline}
  \begin{gather}
  \lambda x y z u. \ccase{\ccase{u}{x_1}{\inl{z}}{x_2}{\inr{(y z)}}}{y_1}{x(y y_1)}{y_2}{x y_2}\label{eq:bdcf2}    
  \end{gather}
  at type
  \[
    (a\to b) \to (c\to a) \to c \to (d+e) \to b,
  \]
  are all normalized to the compact term
  \begin{gather}
    \hspairtwo{\hsappn{3}{(\hspairone{\hsappn{2}{\hspairone{\hsappn{1}{}}}})}}{\hsappn{3}{(\hspairone{\hsappn{2}{\hspairone{\hsappn{1}{}}}})}}
  \end{gather}
  at the ENF type
  \begin{multline*}
    (d \times c \times (c\to a) \times (a\to b) \to b)\times\\
    (e \times c \times (c\to a) \times (a\to b) \to b),
  \end{multline*}
  which can then be reverse-normalized to (\ref{ex2:output}), if desired.
\end{example}

A reviewer once remarked that the two previous examples can be handled
just by a CPS transformation. While our implementation \emph{will} be
based on continuations, the reason why these examples are handled by
our method are not continuations, but rather the fact that all sum
types can be eliminated, allowing us to choose a canonical term in the
compact representation of the $\{\to,\times\}$-typed lambda calculus.

\begin{example}[Commuting conversions] The left and right hand sides
  of the common commuting conversions,
  \begin{multline}\label{eq:casapp}
    \lambda x y z u. \ccase{u}{v_1}{y v_1}{v_2}{z v_2} x =_{\beta\eta}\\
    =_{\beta\eta}\lambda x y z u. \ccase{u}{v_1}{(y v_1) x}{v_2}{(z v_2) x},
  \end{multline}
  \begin{multline}\label{eq:cascas}
    \lambda x y z u v. \ccase{\ccase{x}{x_1}{y x_1}{x_2}{z x_2}}{w_1}{u w_1}{w_2}{v w_2} =_{\beta\eta}\\
    =_{\beta\eta}
    \lambda x y z u v. \ccaseone{x}{x_1}{\ccase{y x_1}{w_1}{u w_1}{w_2}{v w_2}}{x_2}{\ccase{z x_2}{w_1}{u w_1}{w_2}{v w_2}}
  \end{multline}
  of types
  \begin{gather*}
    s\to (p\to s\to r)\to (q\to s\to r)\to (p+q)\to r
  \end{gather*}
  and 
  \begin{multline*}
    (p+q) \to (p\to r+s)\to (q\to r+s)\to\\(r \to a)\to (s\to a)\to a,
  \end{multline*}
  are normalized to the compact terms
  \begin{equation}
    \tag{\ref{eq:casapp}'}
    \hspairtwo{\hsappn{2}{\hspairtwo{\hsappn{3}{}}{\hsappn{0}{}}}}{\hsappn{1}{\hspairtwo{\hsappn{3}{}}{\hsappn{0}{}}}},
  \end{equation}
  of ENF type
  \begin{multline*}
    (p\times (s\times q\to r)\times (s\times p\to r)\times s\to r)\times\\
    (q\times (s\times q\to r)\times (s\times p\to r)\times s\to r)
  \end{multline*}
  and
  \begin{gather}
    \tag{\ref{eq:cascas}'}
    \hspairtwo{\hscasn{3}{\hspairone{\hsappn{4}{}}}{\hspairtwo{\hsappn{2}{\hspairone{\hsappn{0}{}}}}{\hsappn{1}{\hspairone{\hsappn{0}{}}}}}}{\hscasn{2}{\hspairone{\hsappn{4}{}}}{\hspairtwo{\hsappn{2}{\hspairone{\hsappn{0}{}}}}{\hsappn{1}{\hspairone{\hsappn{0}{}}}}}},
  \end{gather}
  of ENF type
  \begin{multline*}
        ((s\to a)\times (r \to a)\times (q\to r+s) \times (p\to r+s)\times p \to a)\\
        \times((s\to a)\times (r \to a)\times (q\to r+s) \times (p\to r+s)\times q \to a),
  \end{multline*}
  which can be reverse-normalized to the right-hand sides of
  (\ref{eq:casapp}), and (\ref{eq:cascas}), respectively, if desired.
\end{example}

\begin{example}[Eta equations] Both the left- and the right-hand
  sides of the eta rules (represented as closed terms),
  \begin{gather}
    \label{e:1}\lambda x. x \betaetaeq \lambda x y. x y\\
    \label{e:2}\lambda x. x \betaetaeq \lambda x. \pair{\fst{x}}{\snd{x}}\\
    \label{e:3}\lambda x y. x y \betaetaeq \lambda x y. \ccase{y}{x_1}{x (\inl{x_1})}{x_2}{x (\inr{x_2})}\\
    \label{e:4}\lambda x y z. \ccase{z}{z_1}{\lambda u. x z_1}{z_2}{\lambda u. y z_2} \betaetaeq
    \lambda x y z u. \ccase{z}{z_1}{x z_1}{z_2}{y z_2}\\
    \label{e:5}\lambda x y z. \fst{\ccase{z}{z_1}{{x z_1}}{z_2}{{y z_2}}} \betaetaeq
    \lambda x y z. {\ccase{z}{z_1}{\fst{x z_1}}{z_2}{\fst{y z_2}}}\\
    \label{e:6}\lambda x y z. \snd{\ccase{z}{z_1}{{x z_1}}{z_2}{{y z_2}}} \betaetaeq
    \lambda x y z. {\ccase{z}{z_1}{\snd{x z_1}}{z_2}{\snd{y z_2}}}
  \end{gather}
  of types
  \begin{gather}
    (p\to p)\to (p\to p)\tag{\ref{e:1}}\\
    (p\times q)\to (p\times q)\tag{\ref{e:2}}\\
    ((p+q) \to r)\to ((p+q) \to r)\tag{\ref{e:3}}\\
    (p\to s)\to (q\to s)\to (p+q)\to r\to s\tag{\ref{e:4}}\\
    (p\to s\times r)\to (q\to s\times r)\to (p+q)\to s\tag{\ref{e:5}}\\
    (p\to s\times r)\to (q\to s\times r)\to (p+q)\to r\tag{\ref{e:6}}
  \end{gather}
  are mapped to the same compact term
  \begin{gather}
    \tag{\ref{e:1}'}\hspairone{\hsappn{1}{\hspairone{\hsappn{0}{}}}}\\
    \tag{\ref{e:2}'}\hspairtwo{\hsappn{0}{}}{\hspairone{\hsappn{1}{}}}\\
    \tag{\ref{e:3}'}\hspairtwo{\hsappn{1}{\hspairone{\hsappn{0}{}}}}{\hsappn{2}{\hspairone{\hsappn{0}{}}}}\\
    \tag{\ref{e:4}'}\hspairtwo{\hsappn{3}{\hspairone{\hsappn{1}{}}}}{\hsappn{2}{\hspairone{\hsappn{1}{}}}}\\
    \tag{\ref{e:5}'}\hspairtwo{\hsappn{3}{\hspairone{\hsappn{0}{}}}}{\hsappn{1}{\hspairone{\hsappn{0}{}}}}\\
    \tag{\ref{e:6}'}\hspairtwo{\hsappn{4}{\hspairone{\hsappn{0}{}}}}{\hsappn{2}{\hspairone{\hsappn{0}{}}}},
  \end{gather}
  of ENF types
  \begin{gather*}
    p\times(p\to p)\to p\tag{\ref{e:1}'}\\
    (p\times q\to p)\times(p\times q\to q)\tag{\ref{e:2}'}\\
    (p\times(p\to r)\times(q\to r)\to r)\times\qquad\qquad\qquad\\
    \qquad\qquad\qquad(q\times(p\to r)\times(q\to r)\to r)\tag{\ref{e:3}'}\\
    (r\times p\times (q\to s)\times (p\to s)\to s)\times\qquad\qquad\qquad\\
    \qquad\qquad\qquad(r\times q\times (q\to s)\times (p\to s)\to s)\tag{\ref{e:4}'}\\
    (p\times(q\to s)\times(q\to r)\times(p\to s)\times(p\to r)\to s)\times\qquad\\
    \quad(q\times(q\to s)\times(q\to r)\times(p\to s)\times(p\to r)\to s)\tag{\ref{e:5}'}\\
    (p\times(q\to s)\times(q\to r)\times(p\to s)\times(p\to r)\to r)\times\qquad\\
    \quad(q\times(q\to s)\times(q\to r)\times(p\to s)\times(p\to r)\to r)\tag{\ref{e:6}'},
  \end{gather*}
  and reverse-normalizing these compact terms produces always the
  right-hand side of the corresponding equation involving lambda
  terms.
\end{example}

Finally, as we shall see in the following two examples, our conversion
to compact form does not guarantee a canonical representation for
terms that are equal with respect to the strong forms of
$\beta\eta$-equality used to duplicate subterms
(Example~\ref{example:5}) or change the order of case analysis of
subterms (Example~\ref{example:6}). Although such term transformations
might not be desirable in the setting of real programming languages,
for they change the order of evaluation, in a pure effect-free setting
like a proof assistant, such transformation would be handy to have.

\begin{example}\label{example:5} The following $\beta\eta$-equal terms,
  \begin{gather}
    \label{ex5:1}\lambda x y z u. \ccase{u z}{w}{x w}{w}{y w}\\
    \label{ex5:2}\lambda x y z u. \ccase{u z}{w}{\ccase{uz}{w'}{x w'}{w'}{y w'}}{w}{y w},
  \end{gather}
  of type
  \[
    (f\to g)\to (h\to g)\to i\to (i\to f+h)\to g
  \]
  are normalized to two \emph{different} compact representations:
  \begin{gather}
    \tag{\ref{ex5:1}'}
    \hspairone{\hscasn{0}{\hspairone{\hsappn{1}{}}}{\hspairtwo{\hsappn{4}{\hspairone{\hsappn{0}{}}}}{\hsappn{3}{\hspairone{\hsappn{0}{}}}}}}\\
    \tag{\ref{ex5:2}'}
    \hspairone{\hscasn{0}{\hspairone{\hsappn{1}{}}}{\hspairtwo{\hscasn{1}{\hspairone{\hsappn{2}{}}}{\hspairtwo{\hsappn{5}{\hspairone{\hsappn{0}{}}}}{\hsappn{4}{\hspairone{\hsappn{0}{}}}}}}{\hsappn{3}{\hspairone{\hsappn{0}{}}}}}},
  \end{gather}
  of ENF type
  \[
    (i\to f+h)\times i \times (h\to g)\times (f\to g)\to g
  \]
  which can then be reverse-normalized to the starting lambda terms
  themselves.
\end{example}

\begin{example}\label{example:6} The following $\beta\eta$-equal terms,
  \begin{gather}
    \label{ex6:1}\lambda x y z u v. \ccase{z v}{x_1}{\inl{x}}{x_2}{\ccase{u v}{y_1}{\inr{y}}{y_2}{\inl{x}}}\\
    \label{ex6:2}\lambda x y z u v. \ccase{u v}{y_1}{\ccase{z v}{x_1}{\inl{x}}{x_2}{\inr{y}}}{y_2}{\inl{x}},
  \end{gather}
  of type
  \[
    k\to l\to (f\to g+h)\to (f\to i+j)\to f\to k+l
  \]
  are normalized to two \emph{different} compact representations:
  \begin{gather}
    \tag{\ref{ex6:1}'}
    \langle\hscasn{2}{\hspairone{\hsappn{0}{}}}
    {\hspairtwo{\hsinltwo{\hspairone{\hsappn{5}{}}}}{\hspairone{\hscasn{3}{\hspairone{\hsappn{1}{}}}{\hspairtwo{\hsinrtwo{\hspairone{\hsappn{5}{}}}}{\hsinltwo{\hspairone{\hsappn{6}{}}}}}}}}
    \rangle\\
    \tag{\ref{ex6:2}'}
    \langle\hscasn{1}{\hspairone{\hsappn{0}{}}}
    {\hspairtwo{\hscasn{2}{\hspairone{\hsappn{1}{}}}{\hspairtwo{\hsinltwo{\hspairone{\hsappn{6}{}}}}{\hsinrtwo{\hspairone{\hsappn{5}{}}}}}}{\hsinltwo{\hspairone{\hsappn{5}{}}}}}
    \rangle,
  \end{gather}
  of ENF type
  \[
    f\times (f\to i+j)\times (f\to g+h)\times l \times k\to k+l
  \]
  which can then be reverse-normalized to the starting lambda terms
  themselves.
\end{example}

\paragraph{Comparison to the examples covered by the heuristic of
  \cite{balat_dicosmo_fiore}} In addition to Example~\ref{ex:2} that
was borrowed from \cite{balat_dicosmo_fiore}, other examples that can
be covered from that paper are examples~4.2.1--~4.2.4 and
Example~4.3.1. In these examples, not only are the input and the
output of their TDPE represented uniquely, but also, in the cases when
there are two distinct output normal forms according to Balat et al.,
our normalizer unifies the two normal forms into one, shown below:
\begin{gather}
  \hspairone{\hsappn{0}{}}\tag{4.2.1}\\
  \hspairtwo{\hsinltwo{\hspairone{\hsappn{0}{}}}}{\hsinrtwo{\hspairone{\hsappn{0}{}}}}\tag{4.2.2}\\
  \hspairfour{\hsinldis{\hspairtwo{\hsappn{0}{}}{\hsappn{1}{}}}}{\hsinrdis{\hsinldis{\hspairtwo{\hsappn{0}{}}{\hsappn{1}{}}}}}{\hsinrdis{\hsinrdis{\hsinltwo{\hspairtwo{\hsappn{0}{}}{\hsappn{1}{}}}}}}{\hsinrdis{\hsinrdis{\hsinrtwo{\hspairtwo{\hsappn{0}{}}{\hsappn{1}{}}}}}}\tag{4.2.3}\\
  \hspairfour{\hsinldis{\hspairtwo{\hsappn{1}{}}{\hsappn{0}{}}}}{\hsinrdis{\hsinrdis{\hsinltwo{\hspairtwo{\hsappn{1}{}}{\hsappn{0}{}}}}}}{\hsinrdis{\hsinldis{{\hspairtwo{\hsappn{1}{}}{\hsappn{0}{}}}}}}{\hsinrdis{\hsinrdis{\hsinrtwo{\hspairtwo{\hsappn{1}{}}{\hsappn{0}{}}}}}}\tag{4.2.4}\\
  \hspairtwo{\hsinrdis{\hsinrdis{\hsinltwo{\hspairone{\hsappn{1}{}}}}}}{\hsinrdis{\hsinldis{{\hspairone{\hsappn{1}{}}}}}}.\tag{4.3.1}
\end{gather}
Canonical representations could be obtained in these examples, because
it was possible to represent the input and output terms in the
fragment of the compact calculus which does not include $\delta$'s
(although it still involves sum types).

On the other hand, there are also the examples where the input and
output are \emph{not} unified by our procedure.\footnote{We shall not
  reproduce the compact representation for these examples in this
  paper, but they are available for inspection in the Coq
  formalization accompanying it.}  Examples~4.3.2 and~4.3.3 are not
handled because we do not permute the order of case analyses (as shown
by our Example~\ref{example:6}); Example~6.2.4.2 is not handled
because we do not analyze if a subterm has been used twice in a term
or not (as shown also by our Example~\ref{example:5}); examples~4.3.4
and~4.4 are not even executable in our implementation, because we do
not have a special treatment of the atomic empty type.

Of course, there is nothing stopping us from applying the program
transformations that would allow to handle this kind of cases---or
nothing stopping Balat et al. from first applying our type-directed
normalization procedure before performing their heuristic to unify the
different normal outputs that they sometimes get. The point is that
these two methodologies are orthogonal and they would ideally be used
in combination inside a real-world application; for more comments
about the two approaches, see Section~\ref{sec:conclusion}.

\subsection{A converter for the compact term representation}
\label{subsec:nbe}

In the remaining part of this section, we explain the high-level
structure of our prototype normalizer of lambda terms into compact
terms and vice versa. The full Coq implementation of the normalizer,
together with the examples considered above, is given as a companion
to this paper. This is only one possible implementation, using
continuations, but all the previous material of this paper was written
as generically as possible, so that it is useful if other
implementation techniques are attempted in the future, such as
rewriting based on evaluation contexts (i.e. the first-order
reification of continuations), or abstract machines.

In a nutshell, our implementation is a type-directed partial
evaluator, written in continuation-passing style, with an intermediate
phase between the evaluation and reification phases, that allows to
map a `semantic' representation of a term from a type to its ENF type,
and vice versa.
Such partial evaluators can be implemented very elegantly, and with
getting certain correctness properties for free, using the GADTs from
Ocaml's type system, as shown by the recent work of Danvy, Keller, and
Puech~\cite{DanvyKP2015}. Nevertheless, we had chosen to carry out our
implementation in Coq, because that allowed us to perform a careful
interactive analysis of the necessary normal forms---hence the compact
calculus introduced in the first part of this section.

Usual type-directed partial evaluation (TDPE), aka
normalization-by-evaluation (NBE), proceeds in tho phases. First an
evaluator is defined which takes the input term and obtains its
semantic representation, and then a reifier is used to map the
semantic representation into an output syntactic term. Our TDPE uses
an \emph{intermediate} phase between the two phases, a phase where
type isomorphisms are applied to the semantic domain so that the
narrowing down of a class of equal terms, described in
Section~\ref{sec:types}, is performed on the semantic annotation of a
term.

The semantics that we use is defined by a continuation monad over a
\emph{forcing structure}, together with \emph{forcing fixpoints} that
map the type of the input term into a type of the ambient type theory.
The forcing structure is an abstract signature (Coq module type),
requiring a set \textit{K} of possible worlds, a preorder relation on
worlds, \textit{le}, an interpretation of atomic types,
\textit{pforces}, and \textit{X}, the return type of the continuation
monad.
\begin{framed}
\begin{coqdoccode}
\coqdocnoindent
\coqdockw{Module} \coqdockw{Type} \coqdocvar{ForcingStructure}.\coqdoceol
\coqdocindent{1.00em}
\coqdockw{Parameter} \coqdocvar{K} : \coqdockw{Set}.\coqdoceol
\coqdocindent{1.00em}
\coqdockw{Parameter} \coqdocvar{le} : \coqdocvar{K} \ensuremath{\rightarrow} \coqdocvar{K} \ensuremath{\rightarrow} \coqdockw{Set}.\coqdoceol
\coqdocindent{1.00em}
\coqdockw{Parameter} \coqdocvar{pforces} : \coqdocvar{K} \ensuremath{\rightarrow} \coqdocvar{Proposition} \ensuremath{\rightarrow} \coqdockw{Set}.\coqdoceol
\coqdocindent{1.00em}
\coqdockw{Parameter} \coqdocvar{Answer} : \coqdockw{Set}.\coqdoceol
\coqdocindent{1.00em}
\coqdockw{Parameter} \coqdocvar{X}  : \coqdocvar{K} \ensuremath{\rightarrow} \coqdocvar{Answer} \ensuremath{\rightarrow} \coqdockw{Set}.\coqdoceol
\coqdocnoindent
\coqdockw{End} \coqdocvar{ForcingStructure}.\coqdoceol
\end{coqdoccode}


\end{framed}
The continuation monad is polymorphic and instantiable by a forcing
fixpoint \textit{f} and a world \textit{w}. It ensures that the
preorder relation is respected; intuitively, this has to do with
preserving the monotonicity of context free variables: we cannot
`forget' a free variable i.e. contexts cannot decrease.
\begin{framed}
\begin{coqdoccode}
\coqdocindent{0.00em}
\coqdockw{Definition} \coqdocvar{Cont} \{\coqdocvar{class}:\coqdockw{Set}\}(\coqdocvar{f}:\coqdocvar{K}\ensuremath{\rightarrow}\coqdocvar{class}\ensuremath{\rightarrow}\coqdockw{Set})(\coqdocvar{w}:\coqdocvar{K})(\coqdocvar{x}:\coqdocvar{class})\coqdoceol
\coqdocindent{1.00em}
 := \coqdockw{\ensuremath{\forall}} (\coqdocvar{x0}:\coqdocvar{Answer}), \coqdockw{\ensuremath{\forall}} \{\coqdocvar{w'}\},
\coqdocvar{le} \coqdocvar{w} \coqdocvar{w'} \ensuremath{\rightarrow}\coqdoceol
\coqdocindent{3.00em}
(\coqdockw{\ensuremath{\forall}} \{\coqdocvar{w'{}'}\}, \coqdocvar{le} \coqdocvar{w'} \coqdocvar{w'{}'} \ensuremath{\rightarrow} \coqdocvar{f} \coqdocvar{w'{}'} \coqdocvar{x} \ensuremath{\rightarrow} \coqdocvar{X} \coqdocvar{w'{}'} \coqdocvar{x0}) \ensuremath{\rightarrow}
\coqdocvar{X} \coqdocvar{w'} \coqdocvar{x0}.\coqdoceol
\end{coqdoccode}

\end{framed}

Next, the necessary forcing fixpoints are defined: \textit{bforces},
\textit{cforces}, and \textit{dforces}, which are used to construct the type
of the continuation monad corresponding to \textit{Base},
\textit{CNF}, and \textit{DNF}, respectively; \textit{sforces} is used
for constructing the type of the continuation monad corresponding to
non-normalized types.
\begin{framed}
\begin{coqdoccode}
\coqdocindent{0.00em}
\coqdockw{Fixpoint} \coqdocvar{bforces} (\coqdocvar{w}:\coqdocvar{K})(\coqdocvar{b}:\coqdocvar{Base}) \{\coqdockw{struct} \coqdocvar{b}\} : \coqdockw{Set} :=\coqdoceol
\coqdocindent{1.00em}
\coqdockw{match} \coqdocvar{b} \coqdockw{with}\coqdoceol
\coqdocindent{1.00em}
\ensuremath{|} \coqdocvar{prp} \coqdocvar{p} \ensuremath{\Rightarrow} \coqdocvar{pforces} \coqdocvar{w} \coqdocvar{p}\coqdoceol
\coqdocindent{1.00em}
\ensuremath{|} \coqdocvar{bd} \coqdocvar{d} \ensuremath{\Rightarrow} \coqdocvar{dforces} \coqdocvar{w} \coqdocvar{d}\coqdoceol
\coqdocindent{1.00em}
\coqdockw{end}\coqdoceol
\coqdocindent{0.00em}
\coqdockw{with} 
\coqdocvar{cforces} (\coqdocvar{w}:\coqdocvar{K})(\coqdocvar{c}:\coqdocvar{CNF}) \{\coqdockw{struct} \coqdocvar{c}\} : \coqdockw{Set} :=\coqdoceol
\coqdocindent{1.00em}
\coqdockw{match} \coqdocvar{c} \coqdockw{with}\coqdoceol
\coqdocindent{1.00em}
\ensuremath{|} \coqdocvar{top} \ensuremath{\Rightarrow} \coqdocvar{unit}\coqdoceol
\coqdocindent{1.00em}
\ensuremath{|} \coqdocvar{con} \coqdocvar{c1} \coqdocvar{b} \coqdocvar{c2} \ensuremath{\Rightarrow}\coqdoceol
\coqdocindent{1.50em}
(\coqdockw{\ensuremath{\forall}} \coqdocvar{w'}, \coqdocvar{le} \coqdocvar{w} \coqdocvar{w'} \ensuremath{\rightarrow} \coqdocvar{Cont} \coqdocvar{cforces} \coqdocvar{w'} \coqdocvar{c1} \ensuremath{\rightarrow} \coqdocvar{Cont} \coqdocvar{bforces} \coqdocvar{w'} \coqdocvar{b})\coqdoceol
\coqdocindent{1.50em}
\ensuremath{\times} (\coqdocvar{Cont} \coqdocvar{cforces} \coqdocvar{w} \coqdocvar{c2})\coqdoceol
\coqdocindent{1.00em}
\coqdockw{end}\coqdoceol
\coqdocindent{0.00em}
\coqdockw{with} 
\coqdocvar{dforces} (\coqdocvar{w}:\coqdocvar{K})(\coqdocvar{d}:\coqdocvar{DNF}) \{\coqdockw{struct} \coqdocvar{d}\} : \coqdockw{Set} :=\coqdoceol
\coqdocindent{1.00em}
\coqdockw{match} \coqdocvar{d} \coqdockw{with}\coqdoceol
\coqdocindent{1.00em}
\ensuremath{|} \coqdocvar{two} \coqdocvar{c1} \coqdocvar{c2} \ensuremath{\Rightarrow} (\coqdocvar{Cont} \coqdocvar{cforces} \coqdocvar{w} \coqdocvar{c1}) + (\coqdocvar{Cont} \coqdocvar{cforces} \coqdocvar{w} \coqdocvar{c2})\coqdoceol
\coqdocindent{1.00em}
\ensuremath{|} \coqdocvar{dis} \coqdocvar{c1} \coqdocvar{d2} \ensuremath{\Rightarrow} (\coqdocvar{Cont} \coqdocvar{cforces} \coqdocvar{w} \coqdocvar{c1}) + (\coqdocvar{Cont} \coqdocvar{dforces} \coqdocvar{w} \coqdocvar{d2})\coqdoceol
\coqdocindent{1.00em}
\coqdockw{end}.\coqdoceol
\coqdocemptyline
\coqdocindent{0.00em}
\coqdockw{Fixpoint} \coqdocvar{eforces} (\coqdocvar{w}:\coqdocvar{K})(\coqdocvar{e}:\coqdocvar{ENF}) \{\coqdockw{struct} \coqdocvar{e}\} : \coqdockw{Set} :=\coqdoceol
\coqdocindent{1.00em}
\coqdockw{match} \coqdocvar{e} \coqdockw{with}\coqdoceol
\coqdocindent{1.00em}
\ensuremath{|} \coqdocvar{cnf} \coqdocvar{c} \ensuremath{\Rightarrow} \coqdocvar{cforces} \coqdocvar{w} \coqdocvar{c}\coqdoceol
\coqdocindent{1.00em}
\ensuremath{|} \coqdocvar{dnf} \coqdocvar{d} \ensuremath{\Rightarrow} \coqdocvar{dforces} \coqdocvar{w} \coqdocvar{d}\coqdoceol
\coqdocindent{1.00em}
\coqdockw{end}.\coqdoceol
\coqdocemptyline
\coqdocindent{0.00em}
\coqdockw{Fixpoint} \coqdocvar{sforces} (\coqdocvar{w}:\coqdocvar{K})(\coqdocvar{F}:\coqdocvar{Formula}) \{\coqdockw{struct} \coqdocvar{F}\} : \coqdockw{Set} :=\coqdoceol
\coqdocindent{1.00em}
\coqdockw{match} \coqdocvar{F} \coqdockw{with}\coqdoceol
\coqdocindent{1.00em}
\ensuremath{|} \coqdocvar{prop} \coqdocvar{p} \ensuremath{\Rightarrow} \coqdocvar{pforces} \coqdocvar{w} \coqdocvar{p}\coqdoceol
\coqdocindent{1.00em}
\ensuremath{|} \coqdocvar{disj} \coqdocvar{F} \coqdocvar{G} \ensuremath{\Rightarrow} (\coqdocvar{Cont} \coqdocvar{sforces} \coqdocvar{w} \coqdocvar{F}) + (\coqdocvar{Cont} \coqdocvar{sforces} \coqdocvar{w} \coqdocvar{G})\coqdoceol
\coqdocindent{1.00em}
\ensuremath{|} \coqdocvar{conj} \coqdocvar{F} \coqdocvar{G} \ensuremath{\Rightarrow} (\coqdocvar{Cont} \coqdocvar{sforces} \coqdocvar{w} \coqdocvar{F}) \ensuremath{\times} (\coqdocvar{Cont} \coqdocvar{sforces} \coqdocvar{w} \coqdocvar{G})\coqdoceol
\coqdocindent{1.00em}
\ensuremath{|} \coqdocvar{impl} \coqdocvar{F} \coqdocvar{G} \ensuremath{\Rightarrow} \coqdockw{\ensuremath{\forall}} \coqdocvar{w'},\coqdoceol
\coqdocindent{1.50em}
\coqdocvar{le} \coqdocvar{w} \coqdocvar{w'} \ensuremath{\rightarrow} (\coqdocvar{Cont} \coqdocvar{sforces} \coqdocvar{w'} \coqdocvar{F}) \ensuremath{\rightarrow} (\coqdocvar{Cont} \coqdocvar{sforces} \coqdocvar{w'} \coqdocvar{G})\coqdoceol
\coqdocindent{1.00em}
\coqdockw{end}.\coqdoceol
\end{coqdoccode}


\end{framed}

Given these definitions, we can write an evaluator for compact terms,
actually two simultaneously defined evaluators \textit{evalc} and
\textit{evalb}, proceeding by induction on the input term.
\begin{framed}
\begin{coqdoccode}
\coqdocindent{0.00em}
\coqdockw{Theorem} \coqdocvar{evalc} \{\coqdocvar{c}\} : (\coqdocvar{HSc} \coqdocvar{c} \ensuremath{\rightarrow} \coqdockw{\ensuremath{\forall}} \{\coqdocvar{w}\}, \coqdocvar{Cont} \coqdocvar{cforces} \coqdocvar{w} \coqdocvar{c})\coqdoceol
\coqdocindent{0.00em}
\coqdockw{with} \coqdocvar{evalb} \{\coqdocvar{b} \coqdocvar{c0}\} : (\coqdocvar{HSb} \coqdocvar{c0} \coqdocvar{b} \ensuremath{\rightarrow} \coqdockw{\ensuremath{\forall}} \{\coqdocvar{w}\},\coqdoceol
\coqdocindent{8.00em}
\coqdocvar{Cont} \coqdocvar{cforces} \coqdocvar{w} \coqdocvar{c0} \ensuremath{\rightarrow} \coqdocvar{Cont} \coqdocvar{bforces} \coqdocvar{w} \coqdocvar{b}).\coqdoceol
\end{coqdoccode}


\end{framed}

An evaluator for usual lambda terms can also be defined, by induction
on the input term. A helper function \textit{lforces} analogous to the
list map function for \textit{sforces} is necessary.
\begin{framed}
\begin{coqdoccode}
\coqdocindent{0.00em}
\coqdockw{Fixpoint}\coqdoceol
\coqdocindent{1.00em}
\coqdocvar{lforces} (\coqdocvar{w}:\coqdocvar{K})(\coqdocvar{Gamma}:\coqdocvar{list} \coqdocvar{Formula}) \{\coqdockw{struct} \coqdocvar{Gamma}\} : \coqdockw{Set} :=\coqdoceol
\coqdocindent{1.00em}
\coqdockw{match} \coqdocvar{Gamma} \coqdockw{with}\coqdoceol
\coqdocindent{1.00em}
\ensuremath{|} \coqdocvar{nil} \ensuremath{\Rightarrow} \coqdocvar{unit}\coqdoceol
\coqdocindent{1.00em}
\ensuremath{|} \coqdocvar{cons} \coqdocvar{A} \coqdocvar{Gamma0} \ensuremath{\Rightarrow}\coqdoceol
\coqdocindent{2.00em}
\coqdocvar{Cont} \coqdocvar{sforces} \coqdocvar{w} \coqdocvar{A} \ensuremath{\times} \coqdocvar{lforces} \coqdocvar{w} \coqdocvar{Gamma0}\coqdoceol
\coqdocindent{1.00em}
\coqdockw{end}.\coqdoceol
\coqdocemptyline
\coqdocindent{0.00em}
\coqdockw{Theorem} \coqdocvar{eval} \{\coqdocvar{A} \coqdocvar{Gamma}\} : \coqdocvar{ND} \coqdocvar{Gamma} \coqdocvar{A} \ensuremath{\rightarrow} \coqdockw{\ensuremath{\forall}} \{\coqdocvar{w}\},\coqdoceol
\coqdocindent{3.00em}
\coqdocvar{lforces} \coqdocvar{w} \coqdocvar{Gamma} \ensuremath{\rightarrow} \coqdocvar{Cont} \coqdocvar{sforces} \coqdocvar{w} \coqdocvar{A}.\coqdoceol
\end{coqdoccode}


\end{framed}
The novelty of our implementation (besides isolating the compact term
calculus itself), in comparison to previous type-directed partial
evaluators for the lambda calculus with sums, consists in showing that
one can go back and forth between the semantic annotation at a type
$F$ and the semantic annotation of the normal form $\enf{F}$. The
proof of this statement needs a number of auxiliary lemmas that we do
not mention in the paper. We actually prove two statements
simultaneously, \textit{f2f} and \textit{f2f'}, declared as follows.
\begin{framed}
\begin{coqdoccode}
\coqdocindent{0.00em}
\coqdockw{Theorem} \coqdocvar{f2f} :\coqdoceol
\coqdocindent{1.00em}
(\coqdockw{\ensuremath{\forall}} \coqdocvar{F}, \coqdockw{\ensuremath{\forall}} \coqdocvar{w}, \coqdocvar{Cont} \coqdocvar{sforces} \coqdocvar{w} \coqdocvar{F} \ensuremath{\rightarrow} \coqdocvar{Cont} \coqdocvar{eforces} \coqdocvar{w} (\coqdocvar{enf} \coqdocvar{F}))\coqdoceol
\coqdocindent{0.00em}
\coqdockw{with} \coqdocvar{f2f'} : \coqdoceol
\coqdocindent{1.00em}
(\coqdockw{\ensuremath{\forall}} \coqdocvar{F}, \coqdockw{\ensuremath{\forall}} \coqdocvar{w}, \coqdocvar{Cont} \coqdocvar{eforces} \coqdocvar{w} (\coqdocvar{enf} \coqdocvar{F}) \ensuremath{\rightarrow} \coqdocvar{Cont} \coqdocvar{sforces} \coqdocvar{w} \coqdocvar{F}).\coqdoceol
\end{coqdoccode}


\end{framed}
\noindent
As one can see from their type signatures, \textit{f2f} and
\textit{f2f'} provide a link between the semantics of the standard
lambda calculus for sums (\textit{ND}) and the semantics of our
compact calculus(\textit{HSc}/\textit{HSb}).

We move forward to describing the reification phase. In this phase,
two instantiations of a forcing structure are needed. Unlike the
evaluators, which can work over an abstract forcing structure, the
reifiers need concrete instantiations built from the syntax of the
term calculus in order to produce syntactic normal forms.

The first instantiation is a forcing structure for the standard lambda
calculus with sums. The set of worlds is the set of contexts (lists of
types), the preorder on worlds is defined as the prefix relation on
contexts, the forcing of an atomic type $p$ is the set of terms of
type $p$ in the context $w$, and the answer type of the continuation
monad is the set of terms of type $F$ in the context $w$. One could be
more precise, and instantiate the answer type by the set of
\emph{normal/neutral} terms, like it has been done in most other
implementations of TDPE, and in our own prior works, but for the sake
of simplicity, we do not make that distinction in this paper.
\begin{framed}
\begin{coqdoccode}
\coqdocnoindent
\coqdockw{Module} \coqdocvar{structureND} \textless: \coqdocvar{ForcingStructure}.\coqdoceol
\coqdocindent{1.00em}
\coqdockw{Definition} \coqdocvar{K} := \coqdocvar{list} \coqdocvar{Formula}.\coqdoceol
\coqdocemptyline
\coqdocindent{1.00em}
\coqdockw{Inductive} \coqdocvar{le\_} : \coqdocvar{list} \coqdocvar{Formula} \ensuremath{\rightarrow} \coqdocvar{list} \coqdocvar{Formula} \ensuremath{\rightarrow} \coqdockw{Set} :=\coqdoceol
\coqdocindent{1.00em}
\ensuremath{|} \coqdocvar{le\_\_refl} : \coqdockw{\ensuremath{\forall}} \{\coqdocvar{w}\}, \coqdocvar{le\_} \coqdocvar{w} \coqdocvar{w}\coqdoceol
\coqdocindent{1.00em}
\ensuremath{|} \coqdocvar{le\_\_cons} : \coqdockw{\ensuremath{\forall}} \{\coqdocvar{w1} \coqdocvar{w2} \coqdocvar{F}\},\coqdoceol
\coqdocindent{3.00em}
\coqdocvar{le\_} \coqdocvar{w1} \coqdocvar{w2} \ensuremath{\rightarrow} \coqdocvar{le\_} \coqdocvar{w1} (\coqdocvar{cons} \coqdocvar{F} \coqdocvar{w2}).\coqdoceol
\coqdocemptyline
\coqdocindent{1.00em}
\coqdockw{Definition} \coqdocvar{le} := \coqdocvar{le\_}.\coqdoceol
\coqdocemptyline
\coqdocindent{1.00em}
\coqdockw{Definition} \coqdocvar{le\_refl} : \coqdockw{\ensuremath{\forall}} \{\coqdocvar{w}\}, \coqdocvar{le} \coqdocvar{w} \coqdocvar{w}.\coqdoceol
\coqdocemptyline
\coqdocindent{1.00em}
\coqdockw{Definition} \coqdocvar{pforces} := \coqdockw{fun} \coqdocvar{w} \coqdocvar{p} \ensuremath{\Rightarrow} \coqdocvar{ND} \coqdocvar{w} (\coqdocvar{prop} \coqdocvar{p}).\coqdoceol
\coqdocemptyline
\coqdocindent{1.00em}
\coqdockw{Definition} \coqdocvar{Answer} := \coqdocvar{Formula}.\coqdoceol
\coqdocemptyline
\coqdocindent{1.00em}
\coqdockw{Definition} \coqdocvar{X} :=  \coqdockw{fun} \coqdocvar{w} \coqdocvar{F} \ensuremath{\Rightarrow} \coqdocvar{ND} \coqdocvar{w} \coqdocvar{F}.\coqdoceol
\coqdocnoindent
\coqdockw{End} \coqdocvar{structureND}.\coqdoceol
\end{coqdoccode}


\end{framed}

The second instantiation is a forcing structure for our calculus of
compact terms. The set of worlds is the same as the set of CNFs,
because our context are simply CNFs, the preorder is the prefix
relation on CNFs, the forcing of atomic types are terms of atomic
types, and the answer type of the continuation monad is the set of
terms at base type.
\begin{framed}
\begin{coqdoccode}
\coqdocnoindent
\coqdockw{Module} \coqdocvar{structureHS} \textless: \coqdocvar{ForcingStructure}.\coqdoceol
\coqdocindent{1.00em}
\coqdockw{Definition} \coqdocvar{K} := \coqdocvar{CNF}.\coqdoceol
\coqdocemptyline
\coqdocindent{1.00em}
\coqdockw{Inductive} \coqdocvar{le\_} : \coqdocvar{CNF} \ensuremath{\rightarrow} \coqdocvar{CNF} \ensuremath{\rightarrow} \coqdockw{Set} :=\coqdoceol
\coqdocindent{1.00em}
\ensuremath{|} \coqdocvar{le\_\_refl} : \coqdockw{\ensuremath{\forall}} \{\coqdocvar{w}\}, \coqdocvar{le\_} \coqdocvar{w} \coqdocvar{w}\coqdoceol
\coqdocindent{1.00em}
\ensuremath{|} \coqdocvar{le\_\_cons} : \coqdockw{\ensuremath{\forall}} \{\coqdocvar{w1} \coqdocvar{w2} \coqdocvar{c} \coqdocvar{b}\},\coqdoceol
\coqdocindent{3.00em}
\coqdocvar{le\_} \coqdocvar{w1} \coqdocvar{w2} \ensuremath{\rightarrow} \coqdocvar{le\_} \coqdocvar{w1} (\coqdocvar{con} \coqdocvar{c} \coqdocvar{b} \coqdocvar{w2}).\coqdoceol
\coqdocemptyline
\coqdocindent{1.00em}
\coqdockw{Definition} \coqdocvar{le} := \coqdocvar{le\_}.\coqdoceol
\coqdocemptyline
\coqdocindent{1.00em}
\coqdockw{Definition} \coqdocvar{le\_refl} : \coqdockw{\ensuremath{\forall}} \{\coqdocvar{w}\}, \coqdocvar{le} \coqdocvar{w} \coqdocvar{w}.\coqdoceol
\coqdocemptyline
\coqdocindent{1.00em}
\coqdockw{Definition} \coqdocvar{pforces} := \coqdockw{fun} \coqdocvar{w} \coqdocvar{p} \ensuremath{\Rightarrow} \coqdocvar{HSb} \coqdocvar{w} (\coqdocvar{prp} \coqdocvar{p}).\coqdoceol
\coqdocemptyline
\coqdocindent{1.00em}
\coqdockw{Definition} \coqdocvar{Answer} := \coqdocvar{Base}.\coqdoceol
\coqdocemptyline
\coqdocindent{1.00em}
\coqdockw{Definition} \coqdocvar{X} 
:= \coqdockw{fun} \coqdocvar{w} \coqdocvar{b} \ensuremath{\Rightarrow} \coqdocvar{HSb} \coqdocvar{w} \coqdocvar{b}.\coqdoceol
\coqdocnoindent
\coqdockw{End} \coqdocvar{structureHS}.\coqdoceol
\end{coqdoccode}


\end{framed}

Using the instantiated forcing structures, we can provide reification
functions for terms of the lambda calculus,
\begin{framed}
\begin{coqdoccode}
\coqdocnoindent
\coqdockw{Theorem} \coqdocvar{sreify} : (\coqdockw{\ensuremath{\forall}} \coqdocvar{F} \coqdocvar{w}, \coqdocvar{Cont} \coqdocvar{sforces} \coqdocvar{w} \coqdocvar{F} \ensuremath{\rightarrow} \coqdocvar{ND} \coqdocvar{w} \coqdocvar{F})\coqdoceol
\coqdocnoindent
\coqdockw{with} \coqdocvar{sreflect} : (\coqdockw{\ensuremath{\forall}} \coqdocvar{F} \coqdocvar{w}, \coqdocvar{ND} \coqdocvar{w} \coqdocvar{F} \ensuremath{\rightarrow} \coqdocvar{Cont} \coqdocvar{sforces} \coqdocvar{w} \coqdocvar{F}).\coqdoceol
\end{coqdoccode}


\end{framed}
\noindent
and for our compact terms:
\begin{framed}
\begin{coqdoccode}
\coqdocnoindent
\coqdockw{Theorem} \coqdocvar{creify} : \coqdoceol
\coqdocindent{1.00em}
(\coqdockw{\ensuremath{\forall}} \coqdocvar{c} \coqdocvar{w}, \coqdocvar{Cont} \coqdocvar{cforces} \coqdocvar{w} \coqdocvar{c} \ensuremath{\rightarrow} \coqdocvar{HSc} (\coqdocvar{explogn} \coqdocvar{c} (\coqdocvar{cnf} \coqdocvar{w})))\coqdoceol
\coqdocnoindent
\coqdockw{with} \coqdocvar{creflect} : (\coqdockw{\ensuremath{\forall}} \coqdocvar{c} \coqdocvar{w}, \coqdocvar{Cont} \coqdocvar{cforces} (\coqdocvar{ntimes} \coqdocvar{c} \coqdocvar{w}) \coqdocvar{c})\coqdoceol
\coqdocnoindent
\coqdockw{with} \coqdocvar{dreify} : (\coqdockw{\ensuremath{\forall}} \coqdocvar{d} \coqdocvar{w}, \coqdocvar{Cont} \coqdocvar{dforces} \coqdocvar{w} \coqdocvar{d} \ensuremath{\rightarrow} \coqdocvar{HSb} \coqdocvar{w} (\coqdocvar{bd} \coqdocvar{d}))\coqdoceol
\coqdocnoindent
\coqdockw{with} \coqdocvar{dreflect} : (\coqdockw{\ensuremath{\forall}} \coqdocvar{d} \coqdocvar{c1} \coqdocvar{c2} \coqdocvar{c3}, \coqdoceol
\coqdocindent{0.50em}
\coqdocvar{HSc} (\coqdocvar{explogn} \coqdocvar{c1} (\coqdocvar{cnf} (\coqdocvar{ntimes} \coqdocvar{c3} (\coqdocvar{con} \coqdocvar{c1} (\coqdocvar{bd} \coqdocvar{d}) \coqdocvar{c2})))) \ensuremath{\rightarrow}\coqdoceol
\coqdocindent{0.50em}
\coqdocvar{Cont} \coqdocvar{dforces} (\coqdocvar{ntimes} \coqdocvar{c3} (\coqdocvar{con} \coqdocvar{c1} (\coqdocvar{bd} \coqdocvar{d}) \coqdocvar{c2})) \coqdocvar{d}).\coqdoceol
\end{coqdoccode}


\end{framed}
\noindent
The reifier for atomic types, \textit{preify}, is not listed above,
because it is simply the `run' operation on the continuation monad. As
usually in TDPE, every reification function required its own
simultaneously defined reflection function.

Finally, one can combine the reifiers, the evaluators, and the
functions \textit{f2f} and \textit{f2f'}, in order to obtain both a
normalizing converter of lambda terms into compact terms (called
\textit{nbe} in the Coq implementation), and a converter
of compact terms into lambda terms (called \textit{ebn} in the Coq
implementation). One can, if one desires, also define only a partial
evaluator of lambda terms and only a partial evaluator of compact
terms.


\section{Conclusion}
\label{sec:conclusion}

\paragraph{Summary of our results}

We have brought into relation two distinct fundamental problems of the
lambda calculi underlying modern functional programming languages, one
concerning identity of types, and the other concerning identity of
terms, and we have shown how improved understanding of the first
problem can lead to improved understanding of the second problem.

We started by presenting a normal form of types, the exp-log normal
form, that is a systematic ordering of the high-school identities
allowing for a type to be mapped to normal form. This can be used as a
simple heuristic for deciding type isomorphism, a first such result
for the type language $\{\to,+,\times\}$. We beleive that the link
established to analysis and abstract algebra (the exp-log
decomposition produces a pair of homomorphisms between the additive
and the multiplicative group in an exponential field) may also be
beneficial to programming languages theory in the future.

The typing restrictions imposed to lambda terms in exp-log normal form
allowed us to decompose the standard axioms for $\betaetaeq$ into a
proper and simpler subset of themselves, $\betaetaeqe$. As far as we
are aware, this simpler axiomatization has not been isolated
before. Even more pleasingly, the new axiomatization disentangles the
old one, in the sense that left-hand sides and right-hand sides of the
equality axioms can no longer overlap.

Finally, we ended by giving a compact calculus of terms that can be
used as a more canonical alternative to the lambda calculus when
modeling the core of functional programming languages: the new syntax
does not allow for the $\eta$-axioms of Figure~\ref{fig:syntax:enf}
even to be stated, with the exception of \ref{e:eta:sum} that is still
present, albeit with a restricted type. As our method exploits type
information, it is orthogonal to the existing approaches that rely on
term analysis (discussed below), and hence could be used in addition
to them; we hope that it may one day help with addressing the part of
$\eta$-equality that is still beyond decision procedures.  We also
implemented and described a prototype converter from/to standard
lambda terms.

In the future, we would like to derive declarative rules to describe
more explicitly the extent of the fragment of $\betaetaeq$ decided by
our heuristic, although implicitly that fragment is determined by the
reduction to ENF congruence classes explained in
Section~\ref{sec:equality}. It should be noted that in this respect
our heuristic is no less explicitly described, than the only other
published one~\cite{balat_dicosmo_fiore} (reviewed below).

\paragraph{Related work}
Dougherty and Subrahmanyam \cite{dougherty_subrahmanyam} show that the
equational theory of terms (morphisms) for almost bi-Cartesian closed
categories is complete with respect to the set theoretic
semantics. This presents a generalization of Friedman's completeness
theorem for simply typed lambda calculus without sums (Cartesian
closed categories) \cite{Friedman1975}.

Ghani \cite{ghani1995} proves $\beta\eta$-equality of terms of the
lambda calculus with sum types to be decidable, first proceeding by
rewriting and eta-expansion, and then checking equality up to
commuting conversions by interpreting terms as finite sets of
quasi-normal forms; no canonical normal forms are obtained.

When sums are absent, the existence of a confluent and strongly
normalizing rewrite system proves the existence of canonical normal
forms, and then decidability is a simple check of syntactic identity
of canonical forms. Nevertheless, even in the context when sums are
absent, one may be interested in getting term representations that are
canonical \emph{modulo} type isomorphism, as in the recent works of
D{\'\i}az-Caro, Dowek, and Mart\'inez L\'opez
~\cite{DiazcaroDowek15,DiazcaroMartinezlopezIFL15}.

Altenkirch, Dybjer, Hofmann, and
Scott~\cite{altenkirch_dybjer_hofmann_scott} give another proof of
decidability of $\beta\eta$-equality for the lambda calculus with
sums by carrying out a normalization-by-evaluation argument in
category theory. They provide a canonical interpretation of the syntax
in the category of sheaves for the Grothendieck topology over the
category of constrained environments, and they claim that one can
obtain an algorithm for a decision procedure by virtue of the whole
development being formalizable in extensional Martin-Löf type theory.

In the absence of \ref{eta:sum} \cite{dougherty1993}, or for the
restriction of \ref{eta:sum} to $N$ being a variable
\cite{dicosmo_kesner}, a confluent and strongly normalizing rewrite
system exists, hence canonicity of normal forms for such systems
follows.

In \cite{balat_dicosmo_fiore}, Balat, Di Cosmo, and Fiore, present a
notion of normal form which is a syntactic counterpart to the notion
of normal forms in sheaves of Altenkirch, Dybjer, Hofmann, and
Scott. However, the forms are not canonical, as there may be two
different syntactic normal forms corresponding to a single semantic
one. They also say they believe (without further analysis or proof)
that one can get canonical normal forms if one considers an ordering
of nested $\delta$-expressions.

The normal forms of Balat et al. are sophisticated and determining if
something is a normal form relies on comparing sub-terms up to a
congruence relation $\approx$ on terms; essentially, this congruence
allows to identify terms such as the ones of our
{Example~\ref{example:5} and Example~\ref{example:6}}. For determining
if something is a normal form, in addition to the standard separation
of neutral vs normal terms, one uses three additional criteria: (A)
$\delta$-expressions that appear under a lambda abstraction must only
case-analyze terms involving the abstracted variable; (B) no two terms
which are equal modulo $\approx$ can be case analyzed twice; in
particular, no term can be case analyzed twice; (C) no case analysis
can have the two branches which are equal modulo $\approx$. To enforce
condition A, particular powerful control operators, set/cupto, are
needed in the implementation, requiring a patch of the ocaml
toplevel. Using our compact terms instead of lambda terms should help
get rid of condition A (hence set/cupto), since as we showed keeping a
constructor for $\lambda$'s in the representation of normal forms is
not necessary. On the other hand, we could profit from implementing
checks such as B and C in our implementation; however, our goal was to
see how far we can get in a purely type-directed way without doing any
program analysis.

A final small remark about this line of works: in \cite{balat2009},
Balat used the word ``canonical'' to name his normal forms, but this
does not preserve the usual meaning of that word, as showed in the
previous article~\cite{balat_dicosmo_fiore}.

Lindley \cite{lindley2007} presents another proof of decidability of
$\beta\eta$-equality for the lambda calculus with sums, based on an
original decomposition of the \ref{eta:sum}-axiom into four axioms
involving evaluation contexts (the proof of this decomposition,
Proposition~1, is unfortunately only sketched); the proof of
decidability uses rewriting modulo the congruence relation $\approx$
of Balat et al.

Scherer~\cite{scherer2015} reinterprets Lindley's rewriting approach
to decidability in the setting of the structural proof theory of
maximal multi-focusing, where he brings it in relation to the
technique of preemptive rewriting~\cite{Chaudhuri2008}. Scherer seems
to derive canonicity of his normal forms for natural deduction from
Lindley's results, although the later does not seem to show canonical
forms are a result of his rewriting decidability result.

The idea to apply type isomorphism in order to capture equality of
terms has been used before \cite{ahmad_licata_harper_manuscript}, but
only implicitly. Namely, in the \emph{focusing} approach to sequent
calculi~\cite{LiangMiller2007}, one gets a more canonical
representation of terms (proofs) by grouping all so called
asynchronous proof rules into blocks called asynchronous
phases. However, while all asynchronous proof rules are special kinds
of type isomorphisms, not all possible type isomorphisms are accounted
for by the asynchronous blocks: sequent calculi apply asynchronous
proof rules superficially, by looking at the top-most connectives, but
normalizing sequents (formulas) to their exp-log normal form applies
proof rules deeply inside the proof tree. Our approach can thus also
been seen as moving focusing proof systems into the direction of so
called deep inference systems.





\bibliographystyle{abbrvnat}
\bibliography{explog}





\end{document}